\pgfplotsset{compat=1.5}
\newtheorem{theorem}{Theorem}[section]
\newtheorem{lemma}[theorem]{Lemma}
\newtheorem{definition}[theorem]{Definition}
\newtheorem{fact}[theorem]{Fact}
\newenvironment{proofof}[1]{\begin{trivlist} \item {\bf Proof
#1:~~}}
  {\qed\end{trivlist}}
\newcommand{\namedref}[2]{\hyperref[#2]{#1~\ref*{#2}}}
\newcommand{\thmlab}[1]{\label{thm:#1}}
\newcommand{\thmref}[1]{\namedref{Theorem}{thm:#1}}
\newcommand{\lemlab}[1]{\label{lem:#1}}
\newcommand{\lemref}[1]{\namedref{Lemma}{lem:#1}}
\newcommand{\seclab}[1]{\label{sec:#1}}
\newcommand{\secref}[1]{\namedref{Section}{sec:#1}}
\newcommand{\applab}[1]{\label{app:#1}}
\newcommand{\appref}[1]{\namedref{Appendix}{app:#1}}
\newcommand{\factlab}[1]{\label{fact:#1}}
\newcommand{\factref}[1]{\namedref{Fact}{fact:#1}}
\newcommand{\figlab}[1]{\label{fig:#1}}
\newcommand{\figref}[1]{\namedref{Figure}{fig:#1}}
\newcommand{\deflab}[1]{\label{def:#1}}
\newcommand{\defref}[1]{\namedref{Definition}{def:#1}}
\def \View    {\mdef{\mathsf{View}}}
\def \Ber    {\mdef{\mathsf{Ber}}}
\def \Polya    {\mdef{\mathsf{Polya}}}
\def \DLap    {\mdef{\mathsf{DLap}}}
\newcommand{\PPr}[1]{\ensuremath{\mathbf{Pr}\left[#1\right]}}
\newcommand{\Ex}[1]{\ensuremath{\mathbb{E}\left[#1\right]}}
\newcommand{\EEx}[2]{\ensuremath{\mathbb{E}_{#1}\left[#2\right]}}
\renewcommand{\O}[1]{\ensuremath{\mathcal{O}\left(#1\right)}}
\newcommand{\eps}{\varepsilon}
\def \calA    {\mdef{\mathcal{A}}}
\def \calD    {\mdef{\mathcal{D}}}
\def \calE    {\mdef{\mathcal{E}}}
\def \calM    {\mdef{\mathcal{M}}}
\def \calP    {\mdef{\mathcal{P}}}
\def \calR    {\mdef{\mathcal{R}}}
\def \calS    {\mdef{\mathcal{S}}}
\def \calT    {\mdef{\mathcal{T}}}
\def \calU    {\mdef{\mathcal{U}}}
\def \calV    {\mdef{\mathcal{V}}}
\def \calX    {\mdef{\mathcal{X}}}
\def \sfV    {\mdef{\mathsf{V}}}
\def \sfW    {\mdef{\mathsf{W}}}
\def \sfX    {\mdef{\mathsf{X}}}
\def \sfU    {\mdef{\mathsf{U}}}
\def \calZ    {\mdef{\mathcal{Z}}}
\def \bbZ   {\mdef{\mathbb{Z}}}
\newcommand{\mdef}[1]{{\ensuremath{#1}}\xspace}  
\DeclareMathOperator*{\TVD}{TVD}
\DeclareMathOperator*{\Swap}{Swap}
\newcommand{\ceil}[1]{\mdef{\left\lceil#1\right\rceil}}               
\newcommand{\ignore}[1]{}
\newif\ifnotes\notestrue 
\renewcommand*{\@fnsymbol}[1]{\textcolor{mahogany}{\ensuremath{\ifcase#1\or *\or \dagger\or \ddagger\or
 \mathsection\or \triangledown\or \mathparagraph\or \|\or **\or \dagger\dagger
   \or \ddagger\ddagger \else\@ctrerr\fi}}}
\providecommand{\email}[1]{\href{mailto:#1}{\nolinkurl{#1}\xspace}}
\definecolor{mahogany}{rgb}{0.75, 0.25, 0.0}
\definecolor{bleudefrance}{rgb}{0.19, 0.55, 0.91}
\definecolor{darkgoldenrod}{rgb}{0.72, 0.53, 0.04}
\definecolor{Gred}{RGB}{219, 50, 54}
\definecolor{Ggreen}{RGB}{60, 186, 84}
\definecolor{Gblue}{RGB}{72, 133, 237}
\definecolor{Gyellow}{RGB}{247, 178, 16}
\definecolor{ToCgreen}{RGB}{0, 128, 0}
\definecolor{myGold}{RGB}{231,141,20}
\definecolor{myBlue}{rgb}{0.19,0.41,.65}
\definecolor{myPurple}{RGB}{175,0,124}
\providecommand{\Comments}{0}  
\newcommand{\mytodo}[1]{\ifnum\Comments=1{#1}\fi}
\newcommand{\pasin}[1]{\mytodo{\todo[linecolor=Gred,backgroundcolor=Gred!25,bordercolor=Gred]{Pasin: #1}}}
\newcommand{\samson}[1]{\mytodo{\todo[linecolor=darkgoldenrod,backgroundcolor=darkgoldenrod!25,bordercolor=darkgoldenrod]{Samson: #1}}}
\begin{document}
\allowdisplaybreaks

\title{Differentially Private Aggregation via Imperfect Shuffling}
\author{Badih Ghazi\thanks{Google Research. E-mail: \email{badihghazi@gmail.com}.}
\and
Ravi Kumar\thanks{Google Research. E-mail: \email{ravi.k53@gmail.com}.}
\and
Pasin Manurangsi\thanks{Google Research. E-mail: \email{pasin@google.com}.}
\and
Jelani Nelson\thanks{UC Berkeley and Google Research. E-mail: \email{minilek@alum.mit.edu}.}
\and
Samson Zhou\thanks{Texas A\&M University. Work done while at UC Berkeley and Rice University. E-mail: \email{samsonzhou@gmail.com}.}
}
\date{\today}

\maketitle
\begin{abstract}
In this paper, we introduce the imperfect shuffle differential privacy model, where messages sent from users are shuffled in an \emph{almost} uniform manner before being observed by a curator for private aggregation. We then consider the private summation problem. We show that the standard split-and-mix protocol by Ishai et. al. [FOCS 2006] can be adapted to achieve near-optimal utility bounds in the imperfect shuffle model. Specifically, we show that surprisingly, there is no additional error overhead necessary in the imperfect shuffle model. 
\end{abstract}

\section{Introduction}
Differential privacy (DP)~\cite{DworkMNS06} has emerged as a popular concept that mathematically quantifies the privacy of statistics-releasing mechanisms. 
Consequently, DP mechanisms have been recently deployed in industry~\cite{Greenberg16,ErlingssonPK14,shankland2014google,DingKY17}, as well as by government agencies such as the US Census Bureau~\cite{abowd2018us}. 
DP is parameterized by $\eps$ and $\delta$, where $\eps$ is a privacy loss parameter that is generally a small positive constant such as $1$ and $\delta$ is an approximation parameter or ``failure'' probability that is typically (smaller than) inverse-polynomial in $n$:
\begin{definition}[Differential privacy]
\deflab{def:dp}
\cite{DworkMNS06, dwork2006our}
Given $\eps>0$ and $\delta\in(0,1)$, a randomized algorithm $\calA:X\to Y$ is $(\eps,\delta)$-differentially private if, for every neighboring datasets $x$ and $x'$, and for all $S\subseteq Y$,
\[\PPr{\calA(x)\in S}\le e^{\eps}\cdot\PPr{\calA(x')\in S}+\delta.\]
\end{definition}

In this paper, we study the real summation problem, where each of $n$ parties holds a number $x_i\in[0,1]$ for all $i\in[n]$ and the goal is to privately (approximately) compute $\sum_{i=1}^n x_i$. 
Due to its fundamental nature, the private real summation problem has a wide range of applications, such as private distributed mean estimation~\cite{SureshYKM17,BiswasD0U20}, e.g., in federated learning~\cite{KonecnyMYRSB16,GirgisDDKS21,KairouzMABBBBCC21}, private stochastic gradient descent~\cite{SongCS13,BassilyST14,AbadiCGMMT016,AgarwalSYKM18,ChenWH20}, databases and information systems~\cite{KotsogiannisTHF19,WilsonZLDSG20}, and clustering~\cite{StemmerK18,Stemmer21}. 

In the central model of DP, where a curator is given full access to the raw data in order to release the private statistic or data structure,  the Laplace mechanism~\cite{DworkMNS06} can achieve, for real summation, additive error $\O{\frac{1}{\eps}}$, which is known to be nearly optimal for $\eps\le 1$~\cite{GhoshRS12}. 

However, the ability for the curator to observe the full data is undesirable in many commercial settings, where the users do not want their raw data to be sent to a central curator. 
To address this shortcoming, the local model of DP~\cite{KasiviswanathanLNRS11,warner1965randomized} (LDP) demands that all messages sent from an individual user to the curator is private. 
Unfortunately, although the local model achieves near-minimal trust assumptions, numerous basic tasks provably must suffer from significantly larger estimation errors compared to their counterparts in the central model. 
For the real summation problem, \cite{BeimelNO08} achieves additive error $\mathcal{O}_{\eps}(\sqrt{n})$ and it is known that smaller error bounds cannot be achieved~\cite{ChanSS12}. 

Consequently, the shuffle model~\cite{BittauEMMRLRKTS17,ErlingssonFMRTT19,CheuSUZZ19} of DP was introduced as an intermediary between the generous central model and the strict local model. 
In the shuffle model, the messages sent from the users are randomly permuted before being observed by the curator, in an encode-shuffle-analyze architecture. 
Surprisingly, when users are allowed to send multiple messages, there exist protocols in the shuffle model of DP that achieve additive error $\mathcal{O}_{\eps}(1)$ for the private real summation problem~\cite{GhaziMPV20,BalleBGN20,GhaziKMPS21}. 
Unfortunately, practical applications can lack the ideal settings that provide the full assumptions required by the shuffle model of DP. 

\subsection{Model and Motivation}
\seclab{sec:model-mot}
We first define a natural generalization of the uniform shuffler that tolerates imperfections. 
Let $\Pi$ be the set of permutations on $[n]$.  
For $\pi,\pi'\in\Pi$, we define $\Swap(\pi,\pi')$ to be the minimum number of coordinate \emph{swaps}\footnote{We say that $\pi'$ results from an application of a coordinate swap on $\pi$ if and only if $\pi(i) = \pi'(i)$ on all except two $i \in [n]$.} that can be applied to $\pi$ to obtain $\pi'$. 

\begin{definition}[$\gamma$-Imperfect Shuffler]
\deflab{def:imp:shuffle}
For a distortion parameter $\gamma>0$, we say that $\calS$ is a $\gamma$-imperfect shuffler if, for all $\pi,\pi'\in\Pi$,
\[\PPr{\calS=\pi}\le e^{\gamma \cdot \Swap(\pi,\pi')}\PPr{\calS=\pi'}.\]
\end{definition}

We call an output from such a shuffler a \emph{$\gamma$-imperfect shuffle} or a \emph{$\gamma$-I-shuffle}, for short. 
Here, $\gamma$ represents an upper bound on the multiplicative distortion of the output probabilities of the distributions of the shuffler, i.e., how the distribution deviates from a perfectly symmetric shuffler. 
For example, $\gamma=0$ corresponds to a perfectly symmetric shuffler while $\gamma\to\infty$ represents almost no guarantee from the shuffler. 

To understand the motivation behind this definition, consider a setting where a number of user devices collect statistics to be sent to an intermediate buffer, which is ultimately sent to a central curator for processing. 
The devices may choose to perform this collection over different periods of time, so that immediately sending their statistics over to the curator could reveal information about their identity, through the timestamp.  

For example, consider a setting where sensors are monitoring traffic in US cities during peak afternoon hours. 
Then reports that are received earlier in the day by the curator are more likely to correspond to cities that are in the east, while reports that are received later in the day by the curator are more likely to correspond to cities in the west.  
To mitigate this, the sensors instead could choose a universally fixed hour during the day to broadcast their reports from the previous day, at some random time during the hour.

Specifically, each user $i\in[n]$ could choose a time $t_i$, say normalized without loss of generality to $t_i\in[0,1]$, and send their messages at time $t_i$. 
If the $t_i$ are chosen uniformly at random and this protocol was executed perfectly, it would result in a uniform shuffle of the messages for a buffer that strips both the source information and the exact time of arrival, e.g., \cite{ThomsonW23}.\footnote{We assume in this example that the buffer can queue the messages, and then forward them to the analyst at some point of time, but that it cannot further shuffle them. 
The (imperfect) shuffling we consider stems solely from the randomization of the transmission time of the messages by the users.}
However, issues may arise such as different clock skews, where users may not perfectly synchronize the fixed hour during which the messages should be sent, or communication delays, either because an intermediate link has failed or simply because latency varies across different networks. 
It is unclear how to model the imperfect shuffle resulting from these issues using the standard shuffle model. 

For a better handle on modeling the imperfection, we can assume that each $t_i$ is adversarially chosen in $[0,1]$. 
Moreover, each message transmission time can now be altered by a random offset from the intended release time, where the offset is drawn, e.g., from a Laplacian distribution. 
Specifically, each user $i\in[n]$ draws an offset $\tau_i$ from the (centered) Laplacian distribution with scale $\frac{2}{\gamma}$, and sends their message at time $t_i+\tau_i$ instead of at time $t_i$. 

In other words, each user $i\in[n]$ sends their message at time $t_i+\tau_i$, which is determined by the two following quantities:
\begin{enumerate}
\item 
$t_i$ is an arbitrary and possibly adversarially chosen offset due to nature or some other external source, e.g., clock skews, transmission failure, communication delay.
\item 
$\tau_i$ is an internal source of noise that the protocol can sample from a predetermined distribution to mitigate the negative privacy effects of $t_i$.
\end{enumerate}

Note that whereas two permutations $\pi, \pi'$ on $[n]$ with swap distance one were equally likely to be output by the shuffler, this may now no longer be the case. 
On the other hand, for fixed $i,j\in[n]$ and conditioning on the values of $\{t_1,\tau_1,\ldots,t_n,\tau_n\}\smallsetminus\{t_i,\tau_i,t_j,\tau_j\}$, we can see that for $a,b\in[n]$, the probability that $t_a+\tau_a\le t_i+\tau_i\le t_{a+1}+\tau_{a+1}$ and $t_b+\tau_b\le t_j+\tau_j\le t_{b+1}+\tau_{b+1}$ is within an $e^{\gamma}$ factor of the probability that $t_a+\tau_a\le t_j+\tau_j\le t_{a+1}+\tau_{a+1}$ and $t_b+\tau_b\le t_i+\tau_i\le t_{b+1}+\tau_{b+1}$. 


Specifically, let $\calE_1$ be the event that $\tau_i\in[t_a+\tau_a-t_i ,t_{a+1}+\tau_{a+1}-t_i]$, where $\tau_i$ is a (centered) Laplace random variable and scale $\frac{2}{\gamma}$. 
Similarly, let $\calE_2$ be the event that $\tau_j\in[t_b+\tau_b-t_j, t_{b+1}+\tau_{b+1}-t_j]$ where $\tau_j$ is a (centered) Laplace random variable and scale $\frac{2}{\gamma}$. 
Furthermore, let $\calE_3$ be the event that $\tau_j\in[t_a+\tau_a-t_j ,t_{a+1}+\tau_{a+1}-t_j]$ and $\calE_4$ be the event that $\tau_i\in[t_b+\tau_b-t_i, t_{b+1}+\tau_{b+1}-t_i]$. 
Then by the properties of the Laplace distribution and the assumption that $t_i, t_j \in [0, 1]$, we have 
$\PPr{\calE_1\wedge\calE_2} = \PPr{\calE_1}\PPr{\calE_2}\le (e^{\gamma/2} \cdot \PPr{\calE_3}) (e^{\gamma/2} \cdot \PPr{\calE_4}) = e^\gamma \cdot \PPr{\calE_3\wedge\calE_4}.$ 
Thus, the resulting distribution over permutations is captured by the $\gamma$-I-shuffle model. 

We can naturally generalize this setting to the model where each user sends $m$ messages, e.g., $m$ buffers collect messages from $n$ users, which results in times $\{t_{i,j}\}_{i\in[n],j\in[m]}$ and offsets $\{\tau_{i,j}\}_{i\in[n],j\in[m]}$. 
Formally, for $m$ rounds of messages for the $n$ users, $\{m_{i,j}\}_{i\in[n],j\in[m]}$, a separate permutation $\pi_j$ drawn from a $\gamma$-imperfect shuffler is used to shuffle the messages $\{m_{i,j}\}_{i\in[n]}$, for each $j\in[m]$. 
For example, $\{m_{i,1}\}_{i\in[n]}$ is shuffled according to a permutation $\pi_1$ drawn from a $\gamma$-imperfect shuffler, $\{m_{i,2}\}_{i\in[n]}$ is shuffled according to an independent permutation $\pi_2$ drawn from the same $\gamma$-imperfect shuffler, and so on and so forth. 

We remark that the above model is sometimes referred to as the \emph{$m$-parallel shuffling} model; another model that has been considered in literature is one where all the $mn$ messages are shuffled together using a single shuffler. 
We only focus on the former in this paper.
It remains an interesting open question whether our results can be extended to the latter model.

\subsection{Our Contributions}
Surprisingly, we present a protocol for the real summation problem that matches the utility bounds of the best protocols in the shuffle model. 
Thus, we show that there is no additional error overhead necessary in the $\gamma$-I-shuffle model, i.e., there is no utility loss due to the imperfect shuffler. 

\begin{restatable}{theorem}{thmmain}
\thmlab{thm:main}
Let $n\ge 19$ and $\gamma\le\frac{\log\log n}{80}$ be a distortion parameter. 
Then there exists an $(\eps,\delta)$-DP protocol for summation in the $\gamma$-I-shuffle model with expected absolute error $\O{\frac{1}{\eps}}$ and 
$m=\O{e^{4\gamma}+\frac{e^{4\gamma}(\log\frac{1}{\delta}+\log n)}{\log n}}$
messages per party. 
Each message uses $\O{\log q}$ bits, for $q=\ceil{2n^{3/2}}$. 
\end{restatable}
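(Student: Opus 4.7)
The plan is to adapt the split-and-mix protocol of Ishai et al.\ to the $m$-parallel $\gamma$-I-shuffle model. Each user $i$ discretizes $x_i$ to $v_i = \flr{q x_i} \in \bbZ_q$ for $q = \ceil{2n^{3/2}}$, adds a discrete Laplace (or Polya) noise $\eta_i$ of scale $\O{1/\eps}$ to form $\tilde v_i = (v_i + \eta_i) \bmod q$, and samples $m-1$ uniform shares $y_i^{(1)},\ldots,y_i^{(m-1)}$ in $\bbZ_q$ with $y_i^{(m)} = \tilde v_i - \sum_{j<m} y_i^{(j)} \bmod q$. In round $j$, user $i$ transmits $y_i^{(j)}$, and the round is shuffled independently by a $\gamma$-I-shuffler $\calS_j$. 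The curator sums all $nm$ received messages modulo $q$ and rescales by $1/q$ to approximate $\sum_i x_i$. The utility analysis will be direct: discretization contributes error $\O{n/q}=\O{1/\sqrt{n}}$, the noise contributes $\O{1/\eps}$ in expectation, and with $\sum_i \tilde v_i$ bounded above by $q$ with high probability, the modular reduction is lossless.

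The central privacy observation is that in any round $j < m$ the shares $(y_i^{(j)})_{i\in[n]}$ are i.i.d.\ uniform on $\bbZ_q$, so the shuffled list is distributed as $n$ i.i.d.\ uniform samples from $\bbZ_q$ \emph{regardless} of both the shuffler's bias and the user values. Hence rounds $1,\ldots,m-1$ contribute nothing to the privacy loss, and the analysis reduces to round $m$, where $y_i^{(m)}$ encodes $\tilde v_i$. For neighboring datasets $D, D'$ differing only in user $1$'s value, I would expand the joint view probability over shuffle permutation tuples $(\pi_1,\ldots,\pi_m)$ and cancel the uniform $(1/q)^{n(m-1)}$ factor coming from the first $m-1$ rounds, yielding
\[
\frac{\PPr{L_1,\ldots,L_m \mid D}}{\PPr{L_1,\ldots,L_m \mid D'}} = \frac{\sum_{\pi \in \calC_D} \prod_j \PPr{\calS_j = \pi_j}}{\sum_{\pi \in \calC_{D'}} \prod_j \PPr{\calS_j = \pi_j}},
\]
where $\calC_D$ is the set of permutation tuples under which the observed lists are consistent with the share distribution induced by $D$. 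In the perfect-shuffle case every weight equals $1/(n!)^m$, and the combinatorial amplification analysis of \cite{GhaziMPV20} bounds this ratio; under a $\gamma$-I-shuffle, any two tuples related by transposing user $1$'s round-$m$ position with that of another user differ in weight by at most $e^{2\gamma}$.

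The hard part will be propagating this $e^{\O{\gamma}}$ weight distortion through the ``privacy amplification by shuffling'' argument without an exponential blowup in $n$ or $m$. The key is to set up a natural bijection between $\calC_D$ and $\calC_{D'}$ that differs only by a single transposition in $\pi_m$, exploiting both the uniformity of the non-correction shares (which symmetrizes over the non-distinguished users) and the locality of the swap (which caps the swap distance at $2$). Tracking the perturbation contributes $e^{2\gamma}$ from the shifted weight at user $1$'s correction share and a further $e^{2\gamma}$ from the perturbation of the underlying ``consistency count'' on which the amplification lemma rests, for a combined overhead of $e^{4\gamma}$. Absorbing this factor into the share count by setting $m = \O{e^{4\gamma}+e^{4\gamma}(\log(1/\delta)+\log n)/\log n}$ recovers $(\eps,\delta)$-DP with the same $\O{1/\eps}$ expected absolute error, matching the statement.
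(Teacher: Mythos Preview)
Your proposed bijection between $\calC_D$ and $\calC_{D'}$ via a single transposition in $\pi_m$ cannot work, and this is the fatal gap. If $(\pi_1,\ldots,\pi_m)\in\calC_D$ and you swap $\pi_m(1)\leftrightarrow\pi_m(k)$, then user $k$'s sum constraint becomes $\sum_{j<m}L_j[\pi_j(k)]+L_m[\pi_m(1)]=\tilde v_k-(L_m[\pi_m(k)]-L_m[\pi_m(1)])$, which must still equal $\tilde v_k$ in $\calC_{D'}$ (user $k$ is unchanged). That forces $L_m[\pi_m(k)]=L_m[\pi_m(1)]$, hence $\tilde v_1'=\tilde v_1$, a contradiction. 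No transposition confined to one round can repair user $1$'s constraint without breaking another user's. The earlier claim that ``rounds $1,\ldots,m-1$ contribute nothing to the privacy loss'' is also misleading: although each $L_j$ for $j<m$ is marginally i.i.d.\ uniform, the consistency set $\calC_D$ in your own ratio formula couples all $m$ permutations through the per-user sum $\sum_j L_j[\pi_j(i)]$, so the analysis does not localize to round $m$. (By symmetry $y_i^{(m)}$ is marginally uniform too, so the same reasoning would say round $m$ contributes nothing either.)

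The paper's route is structurally different and avoids bounding the DP ratio directly. It reduces $(\eps,\delta)$-DP to $\sigma$-security of the split-and-mix protocol (TVD closeness on any two inputs with the \emph{same} sum; the $\eps$ comes entirely from the discrete-Laplace noise on the sum, and $\sigma$ becomes the $\delta$), then bounds the TVD by a collision probability $\PPr{\calP(\vec{\sfX})=\calP'(\vec{\sfX})}$, which is controlled by $\Ex{q^{C(G)}}$ for $C(G)$ the number of connected components of the communication graph of the composed shuffle $\calS^{-1}\circ\calS'$ over all $m$ rounds. The $\gamma$-dependence enters through a coupling argument (\lemref{lem:connect:one:prob}) that compares, for each cut $S\subset[n]$, the mass a $\gamma$-imperfect shuffler places on permutations preserving $S$ against permutations at swap distance $\le k$ from them; choosing $k\approx n/(4e^{4\gamma})$ is what produces the $e^{4\gamma}$ factor in $m$. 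This connected-component machinery is precisely what captures the cross-round mixing your single-round transposition argument misses.
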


Observe that when $\delta$ is inverse-polynomial in $n$ and the distortion parameter $\gamma$ is a constant $\O{1}$, then the number of messages $m$ sent by each player in \thmref{thm:main} is a constant. 
Moreover, under these settings, \thmref{thm:main} recovers the guarantees in the standard shuffle model from~\cite{BalleBGN20,GhaziMPV20}, though we remark that more communication efficient protocols~\cite{GhaziKMPS21} are known in the standard shuffle model across more general settings. 
Regardless, we again emphasize that the privacy and utility guarantees of the protocol are independent of the distortion parameter $\gamma$. 

\subsection{Overview of our Techniques}
In this section, we describe both our protocol for private real summation in the $\gamma$-I-shuffle model and the corresponding analysis for correctness and privacy. 

A natural starting point is the recent framework by~\cite{ZhouS22,ZhouSCM23}, which achieves amplification of privacy using \emph{differentially oblivious} (DO) shufflers that nearly match amplification of privacy results using fully anonymous shufflers~\cite{ErlingssonFMRTT19,BalleBGN19,CheuSUZZ19,FeldmanMT21}. 
Unfortunately, the framework crucially uses LDP protocols, which are known to not give optimal bounds even with fully anonymous shufflers. 
For instance, \cite{BalleBGN20,CheuSUZZ19,BalleBGN19b} showed that any single-message shuffled protocol for summation based on LDP protocols must exhibit mean squared error $\Omega(n^{1/3})$ or absolute error $\Omega(n^{1/6})$.

Another natural approach is to adapt recent works for private real summation in the shuffle model, e.g.,~\cite{GhaziMPV20,GhaziKMPS21}. 
One challenge in generalizing these proofs is that they often leverage the fully anonymous shuffler by analyzing a random sample from an alternate view of the output of the local randomizers, which often have some algebraic or combinatorial interpretation that facilitates the proof of specific desirable properties. 
However, these properties often seem substantially more difficult to prove once the symmetry of the fully anonymous shuffler is lost. 
In fact, we do not even know the mass that the $\gamma$-imperfect shuffler places on each permutation. 

\paragraph{From private real summation to statistical security of summation on fixed fields.}
We first use an observation from~\cite{BalleBGN20} that reduces the problem of private real summation to the problem of private summation on a fixed field of size $q$, so that each user has an input $x_i\in\mathbb{F}_q$ for all $i\in[n]$. 
We then consider the well-known split-and-mix protocol~\cite{IshaiKOS06}, where each user $i$ outputs a set of $m$ messages $x_{i,1},\ldots,x_{i,m}\in\mathbb{F}_q$ uniformly at random conditioned on $x_{i,1}+\ldots+x_{i,m}=x_i$. 
For the private summation on a fixed field problem, we adapt a well-known reduction~\cite{BalleBGN20} for the split-and-mix protocol  in the shuffle DP model to the notion of statistical security in the $\gamma$-I-shuffle model. 
Statistical security demands small total variation between the output of a protocol on input $x$ and input $x'$, if $\sum_{i=1}^n x_i=\sum_{i=1}^n x'_i$. 
In other words, it suffices to show that the output distribution looks ``similar'' for two inputs with the same sum. 
See \defref{def:sigma:security} for a formal definition of statistical security. 

To show statistical security, we first upper-bound the total variation distance in terms of the probability that two independent instances of the same protocol with the \emph{same} input give the same output. 
Balle et al.~\cite{BalleBGN20} use a similar approach, but then utilizes the symmetry of the fully anonymous shuffler to further upper-bound this quantity in terms of the probability that $\vec{\calR}(\vec{\sfX})=\calS\circ\vec{\calR}'(\vec{\sfX})$, where $\vec{\sfX}=(x_1,\ldots,x_n)$ is the input vector, $\vec{\calR}$ and $\vec{\calR}'$ are independent instances of the local randomizer, and $\calS$ is an instance of the uniform shuffler. 
We do not have access to such symmetries in the $\gamma$-I-shuffle model or even explicit probabilities that the $\gamma$-imperfect shuffler places on each permutation. 

\paragraph{Connected components on a communication graph.}
Instead, we first upper-bound the total variation distance by $\vec{\calR}(\vec{\sfX})=\calS^{-1}\circ\calS'\circ\vec{\calR}'(\vec{\sfX})$, where $\calS^{-1}$ is the inverse of an instance of a $\gamma$-imperfect shuffle and $\calS'$ is an independent instance of the same $\gamma$-imperfect shuffle. 
Intuitively, $\vec{\calR}(\vec{\sfX})$ and $\calS^{-1}\circ\calS'\circ\vec{\calR}'(\vec{\sfX})$ can look very different if there exists a large number of users whose messages are not shuffled with those of other users. 
Formally, this can be captured by looking at the number of connected components in the communication graph of $\calS^{-1}\circ\calS'\circ\vec{\calR}'(\vec{\sfX})$, so that there exists an edge connecting users $i$ and $j$ if the protocol swaps one of their messages. 
Hence, evaluating the number of connected components in the communication graph is closely related to analyzing the probability that there is no edge between $S$ and $[n]\smallsetminus S$, for a given set $S\subseteq[n]$. 

Although this quantity would be somewhat straightforward to evaluate for a uniform shuffler~\cite{BalleBGN20}, it seems more challenging to evaluate for $\gamma$-imperfect shufflers, since we do not have explicit probabilities for each permutation. 
Therefore, we develop a novel coupling argument to relate the probability that there is no edge between $S$ and $[n]\smallsetminus S$ in the $\gamma$-I-shuffle model to the probability of this event in the shuffle model. 
In particular, a specific technical challenge that our argument handles is when both $S$ and $[n]\smallsetminus S$ have large cardinality, because then there can be a permutation $\pi$ that swaps many coordinates while still leaving $S$ and $[n]\smallsetminus S$ disconnected. 
However, if we simply relate the probability of $\Pi$ in the shuffle and the $\gamma$-I-shuffle model, we incur a gap of $e^{t\cdot\gamma}$, where $\gamma$ is the distortion parameter and $t$ is the number of swaps by $\Pi$, which can have size $\Omega(n)$. 
Thus without additional care, this gap can overwhelm the probability achieved from the coupling argument. 
We circumvent this issue by considering a subset of $S$ with size $k$ and coupling the ``good'' permutations in the shuffle and the $\gamma$-I-shuffle model, which results in a smaller gap of $e^{k\cdot\gamma}$. 
For more details, see \lemref{lem:connect:one:prob}. 

\paragraph{Putting things together.}
At this point, we are almost done. 
Unfortunately, our coupling only addresses the case where a single imperfect shuffle is performed on a local randomizer, but we require the bound for the composition $\calS^{-1}\circ\calS'\circ\vec{\calR}'(\vec{\sfX})$, which seems significantly more challenging because communication between users $i$ and $j$ under $\calS'$ may be ``erased'' by $\calS^{-1}$. 
Instead, we show a simple observation for $\gamma$-imperfect shuffling, which states that if $\calS,\calS'$ are two shufflers such that $\calS$ is a $\gamma$-imperfect shuffler, then $\calS'\circ\calS$ is a $\gamma$-imperfect shuffler. 
This statement, presented in \lemref{lem:do:postprocess}, can be considered as a post-processing preservation property of $\gamma$-imperfect shuffling. 
In light of this statement, we can now view $\calS^{-1}\circ\calS'\circ\vec{\calR}'(\vec{\sfX})$ as a single $\gamma$-imperfect shuffler applied to the local randomizer, and use our new results upper-bounding the number of connected components in the resulting communication graph to ultimately show $\sigma$-security. 

Our analysis crucially utilizes the decomposition of the $\gamma$-imperfect shuffler on $m$ messages across $n$ users, i.e., the $m$-parallel shuffling model, by first considering the communication graph induced by a single round of shuffling between the $n$ users, and then scaling the effects $m$ times. 
Considering the model where all the $mn$ messages are shuffled together using a single shuffler will likely need a separate approach in the analysis. 

\subsection{Preliminaries}
For an integer $n>0$, we define $[n]:=\{1,\ldots,n\}$. 

\begin{definition}[Total variation distance]
Given probability measures $\mu,\nu$ on a domain $\Omega$, their total variation distance is defined by
\[\TVD(\mu,\nu)=\frac{1}{2}\|\mu-\nu\|_1=\frac{1}{2}\sum_{x\in\Omega}\lvert\mu(x)-\nu(x)\rvert.\]
\end{definition}

\begin{definition}[$\sigma$-security]
\deflab{def:sigma:security}
Given a security parameter $\sigma>0$, a protocol $\calP$ is $\sigma$-secure for computing a function $f:\calX^n\to Z$ if, for any $x,x'\in\calX^n$ such that $f(x)=f(x')$, we have
\[\TVD(\calP(x),\calP(x'))\le2^{-\sigma}.\]
\end{definition}

Recall the following two well-known properties of differential privacy:
\begin{theorem}[Basic Composition of differential privacy, e.g.,~\cite{DworkR14}]
\thmlab{thm:adv:comp}
Let $\eps, \delta \geq 0$. 
Any mechanism that permits $k$ adaptive interactions with mechanisms that preserve $(\eps,\delta)$-differential privacy guarantees $(k\eps,k\delta)$-differential privacy. 
\end{theorem}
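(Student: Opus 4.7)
The plan is to proceed by induction on $k$. The base case $k=1$ is just the definition of $(\eps,\delta)$-DP. For the inductive step, I let $M'$ denote the composition of the first $k-1$ adaptive mechanisms, which by hypothesis is $((k-1)\eps,(k-1)\delta)$-DP, and let $M_k$ denote the $k$-th mechanism, which by hypothesis is $(\eps,\delta)$-DP on every branch (i.e.\ for every transcript of $M'$'s outputs). The goal is to show that the full mechanism $M = (M',M_k(\cdot,M'(\cdot)))$ is $(k\eps,k\delta)$-DP.

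For neighbors $x,x'$ and a target event $T$ in the joint output space, I would slice $T$ along the first $k-1$ coordinates, writing $T_{y_1}=\{y_2:(y_1,y_2)\in T\}$ for each transcript $y_1$ of $M'$. Applying the $(\eps,\delta)$-DP guarantee of $M_k(\cdot,y_1)$ pointwise in $y_1$ gives $\PPr{M_k(x,y_1)\in T_{y_1}}\le e^{\eps}\PPr{M_k(x',y_1)\in T_{y_1}}+\delta$. Integrating against the distribution of $M'(x)$ then reduces the task to bounding $\EEx{y_1\sim M'(x)}{f(y_1)}$, where $f(y_1):=\PPr{M_k(x',y_1)\in T_{y_1}}\in[0,1]$, in terms of the same expectation under $M'(x')$. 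Using the layer-cake identity $\Ex{f(Z)}=\int_0^1\PPr{f(Z)>t}\,dt$ and invoking the $((k-1)\eps,(k-1)\delta)$-DP of $M'$ at each threshold $t$ transfers the expectation from $M'(x)$ to $M'(x')$, yielding the desired multiplicative factor of $e^{k\eps}$.

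The main obstacle is the additive slack: done naively, the two-step calculation produces $\bigl(1+(k-1)e^{\eps}\bigr)\delta$ rather than the clean $k\delta$ stated in \thmref{thm:adv:comp}. To recover the clean bound I would switch to the equivalent ``probabilistic DP'' characterization of $(\eps,\delta)$-DP: for each $i$ and each pair of neighbors, there is a coupling of $M_i(x)$ with $M_i(x')$ and a ``bad event'' $B_i$ of probability at most $\delta$ under which the privacy-loss random variable $\log\bigl(\Pr[M_i(x)=y]/\Pr[M_i(x')=y]\bigr)$ exceeds $\eps$. A union bound over the $k$ bad events yields total failure probability at most $k\delta$, and on its complement the privacy losses sum to at most $k\eps$ by additivity of logarithms; re-expressing this in the $(\eps,\delta)$-DP definition gives exactly $(k\eps,k\delta)$-DP. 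This characterization-swap is the subtle step, since probabilistic DP is slightly stronger than plain $(\eps,\delta)$-DP in general, so I would need to invoke the standard reduction (via Dwork--Roth Lemma 3.17 / the ``dense model'' lemma) that lets one assume the probabilistic form up to an arbitrarily small loss, which is absorbed in the final bound.
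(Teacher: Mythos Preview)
The paper does not prove this theorem; it is stated without proof as a background fact cited from \cite{DworkR14}, so there is no in-paper argument to compare against.

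On the substance of your plan: you correctly identify that the direct induction (apply the $(\eps,\delta)$-DP of $M_k$ pointwise in $y_1$, then transfer the resulting $[0,1]$-valued expectation from $M'(x)$ to $M'(x')$ via the layer-cake formula) produces an additive term $\bigl(1+e^{\eps}+\cdots+e^{(k-1)\eps}\bigr)\delta$ rather than $k\delta$. Your proposed repair, however, has a genuine gap. Probabilistic DP---the assertion that the privacy-loss variable $L=\log\bigl(p(y)/q(y)\bigr)$ exceeds $\eps$ with probability at most $\delta$ under $p$---is strictly stronger than $(\eps,\delta)$-DP, and the conversion is \emph{not} lossless: from $(\eps,\delta)$-DP one can only extract, e.g., $\PPr{L>2\eps}\le\delta/(1-e^{-\eps})$ (apply the DP inequality to the set $\{L>2\eps\}$ and use $q<e^{-2\eps}p$ there). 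Running your union-bound argument at that level yields roughly $(2k\eps,\,k\delta/(1-e^{-\eps}))$-DP, not $(k\eps,k\delta)$. The ``arbitrarily small loss'' you hope to absorb does not exist, and neither Lemma~3.17 of Dwork--Roth nor any dense-model-type statement supplies it.

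The standard route to exactly $k\delta$ avoids probabilistic DP and works with sub-densities instead of bad events. For each round $i$, with conditional output densities $p_i,q_i$ on inputs $x,x'$, set $p_i':=\min(p_i,e^{\eps}q_i)$. Then $p_i'\le e^{\eps}q_i$ pointwise, and $\int(p_i-p_i')=\int\max(0,p_i-e^{\eps}q_i)\le\delta$ by the hockey-stick form of $(\eps,\delta)$-DP. For the (adaptive) product, telescoping
\[
\prod_i p_i-\prod_i p_i'=\sum_{j}\Bigl(\prod_{i<j}p_i'\Bigr)(p_j-p_j')\Bigl(\prod_{i>j}p_i\Bigr)
\]
and integrating (each bracketed product integrates to at most $1$, each $(p_j-p_j')$ to at most $\delta$) gives $\int(p-p')\le k\delta$, while $p'\le e^{k\eps}q$ pointwise. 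Hence for any event $T$, $\PPr{M(x)\in T}\le\int_T p'+k\delta\le e^{k\eps}\PPr{M(x')\in T}+k\delta$. This is the argument underlying the cited reference; the point is that one trims the density rather than conditioning on an event, which is precisely what sidesteps the non-equivalence you flagged.
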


\begin{theorem}[Post-processing of differential privacy~{\cite{DworkR14}}]
\thmlab{thm:dp:comp}
Let $\calM:\calU^*\to X$ be an $(\eps,\delta)$-differential private algorithm. 
Then, for any arbitrary random mapping $g:X\to X'$, we have that $g(\calM(x))$ is $(\eps,\delta)$-differentially private. 
\end{theorem}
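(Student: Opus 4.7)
The plan is to establish the statement in two stages: first for deterministic mappings $g$, where a direct preimage argument suffices, and then to extend to an arbitrary randomized $g$ by conditioning on $g$'s internal randomness and integrating the deterministic bound. This is the standard reduction and requires no new tools beyond \defref{def:dp} itself.

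For the deterministic stage, I would fix neighboring datasets $x, x' \in \calU^*$ and an arbitrary measurable set $S' \subseteq X'$. Define the preimage $T := g^{-1}(S') = \{y \in X : g(y) \in S'\} \subseteq X$. Since $g$ is deterministic, the event $\{g(\calM(x)) \in S'\}$ coincides exactly with $\{\calM(x) \in T\}$, so applying the $(\eps,\delta)$-DP guarantee of $\calM$ to the measurable set $T$ yields
\[\PPr{g(\calM(x)) \in S'} = \PPr{\calM(x) \in T} \le e^\eps \PPr{\calM(x') \in T} + \delta = e^\eps \PPr{g(\calM(x')) \in S'} + \delta,\]
which is the desired inequality in the deterministic case.

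For the randomized stage, I would represent $g$ in canonical form: there exist an auxiliary random seed $r$ drawn from some distribution $\calR$, independent of $\calM$'s input and internal coins, and a deterministic measurable map $h : X \times \mathrm{supp}(\calR) \to X'$ such that $g(y)$ is distributed as $h(y, r)$. For each fixed value of $r$, the map $h(\cdot, r)$ is deterministic, so the previous stage gives pointwise, for every $r$,
\[\PPr{h(\calM(x), r) \in S' \mid r} \le e^\eps \PPr{h(\calM(x'), r) \in S' \mid r} + \delta.\]
Taking expectations over $r$, using independence of $r$ from $\calM$'s randomness and linearity of expectation, then gives
\[\PPr{g(\calM(x)) \in S'} \le e^\eps \PPr{g(\calM(x')) \in S'} + \delta,\]
which is exactly $(\eps,\delta)$-DP for the composition $g \circ \calM$.

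The only real subtlety, and the closest thing to an obstacle, will be formalizing that the randomness driving $g$ can be taken independent of $\calM$; this is where the canonical representation of a randomized map as a measurable family of deterministic maps indexed by an independent seed does the work. Once that representation is in place, the conditioning and integration steps are entirely routine, measurability of the preimages is automatic, and no dependence of the final $(\eps,\delta)$ parameters on $g$ arises because the bound holds pointwise in $r$ before averaging.
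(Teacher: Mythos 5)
Your proposal is a correct and complete proof, but note that the paper does not actually prove this theorem: it is stated as background and cited to Dwork--Roth \cite{DworkR14} with no proof given. Your argument is the standard textbook one (reduce to the deterministic case via preimages, then handle a randomized $g$ by conditioning on an independent seed and averaging), and it matches the reasoning in the cited reference; there is no gap. The one minor point worth being explicit about when formalizing is that the independence of $g$'s internal randomness from $\calM$'s coins and from the input $x$ is part of the definition of post-processing (otherwise the claim is false, e.g., if $g$ could see $x$ directly), which you do flag as the ``only real subtlety.''
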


We use $\DLap(\alpha)$ to denote the discrete Laplace distribution, so that $Z\sim\DLap(\alpha)$ has the probability mass function $\PPr{Z=k}\propto\alpha^{|k|}$ for $k \in \bbZ$. 
We use $\Polya(r, p)$ to denote the Polya distribution with parameter $r > 0, p \in (0, 1)$, which induces the probability density function $k \mapsto \binom{k + r - 1}{k} p^k(1 - p)^r$ for $k \in \bbZ_{\geq 0}$. 
We require the following equivalence between a discrete Laplacian random variable and the sum of a differences of Polya random variables.
\begin{fact}
\factlab{fact:polya:dlap}
Let $x_1,\ldots,x_n,y_1,\ldots,y_n\sim\Polya\left(\frac{1}{n},\alpha\right)$. 
Then $z=\sum_{i=1}^n(x_i-y_i)\sim\DLap(\alpha)$.  
\end{fact}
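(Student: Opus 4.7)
The plan is to prove the fact in two standard steps: first collapse each of the two sums of $n$ independent $\Polya(1/n,\alpha)$ variables into a single $\Polya(1,\alpha)$ variable, and then show that the difference of two independent $\Polya(1,\alpha)$ variables is discrete Laplace.

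For the first step, I would invoke the well-known convolution property of the Polya (negative binomial) family: if $U\sim\Polya(r_1,\alpha)$ and $V\sim\Polya(r_2,\alpha)$ are independent with the same success parameter, then $U+V\sim\Polya(r_1+r_2,\alpha)$. The cleanest way to verify this is via probability generating functions. Since $\Polya(r,\alpha)$ has PGF
\[G_{r,\alpha}(s) \;=\; \sum_{k\ge 0}\binom{k+r-1}{k}(\alpha s)^k(1-\alpha)^r \;=\; \left(\frac{1-\alpha}{1-\alpha s}\right)^{r},\]
by the generalized binomial theorem, PGFs multiply under independent convolution so $\sum_{i=1}^n x_i$ has PGF $\bigl((1-\alpha)/(1-\alpha s)\bigr)^{n\cdot(1/n)} = (1-\alpha)/(1-\alpha s)$, i.e., it is distributed as $\Polya(1,\alpha)$. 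The same conclusion holds for $\sum_{i=1}^n y_i$, and the two resulting variables are independent.

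For the second step, let $X,Y$ be independent $\Polya(1,\alpha)$ variables, so each has PMF $\PPr{X=k} = (1-\alpha)\alpha^k$ for $k\ge 0$. I would directly compute the PMF of $Z=X-Y$. For $m\ge 0$,
\[\PPr{Z=m} \;=\; \sum_{k\ge 0}\PPr{X=k+m}\PPr{Y=k} \;=\; (1-\alpha)^2\alpha^m\sum_{k\ge 0}\alpha^{2k} \;=\; \frac{1-\alpha}{1+\alpha}\,\alpha^{m},\]
and by the symmetry $X\stackrel{d}{=}Y$, $\PPr{Z=m}=\PPr{Z=-m}$, yielding $\PPr{Z=m}=\frac{1-\alpha}{1+\alpha}\alpha^{|m|}$ for all $m\in\bbZ$. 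This is exactly the PMF of $\DLap(\alpha)$, up to the normalizing constant $\frac{1-\alpha}{1+\alpha}$ that ensures it sums to one.

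There is no real obstacle here; the only non-trivial ingredient is the convolution identity for the Polya distribution, which is standard and follows immediately from the PGF computation above. Combining the two steps gives $z=\sum_{i=1}^n(x_i-y_i)\sim\DLap(\alpha)$, as claimed.
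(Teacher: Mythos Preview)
Your argument is correct. The paper states this result as a \emph{Fact} without proof, treating it as a known identity, so there is no ``paper's proof'' to compare against. Your two-step approach---using the multiplicativity of the Polya PGF to reduce each sum to a single $\Polya(1,\alpha)$ variable, and then directly convolving two geometric PMFs---is the standard derivation and would serve perfectly well as the omitted justification.
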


We also require the following property about randomized rounding.
\begin{lemma}
\lemlab{lem:round:err}
\cite{BalleBGN19}
Given a precision $p\ge 1$, let $x_1,\ldots,x_n\in\mathbb{R}$ and $y_i=\lfloor x_ip\rfloor+\Ber(x_ip-\lfloor x_ip\rfloor)$ for each $i\in[n]$. 
Then 
\[\Ex{\left(\sum_{i=1}^n\left(x_i-\frac{y_i}{p}\right)\right)^2}\le\frac{n}{4p^2}.\]
\end{lemma}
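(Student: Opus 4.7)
The plan is to rewrite each rounding error $x_i - y_i/p$ as an independent, centered, bounded random variable, and then apply additivity of variance. First I would set $q_i = x_i p - \flr{x_i p} \in [0,1)$ and let $B_i \sim \Ber(q_i)$ be the independent Bernoulli draw used to form $y_i = \flr{x_i p} + B_i$. Since $x_i = (\flr{x_i p} + q_i)/p$, a direct cancellation gives
\[x_i - \frac{y_i}{p} \;=\; \frac{q_i - B_i}{p}.\]

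Two elementary facts about $B_i$ then finish the work. First, $\Ex{B_i} = q_i$, so the summand $(q_i - B_i)/p$ has mean zero. Second, the variance of $B_i$ equals $q_i(1-q_i)$, which is maximized on $[0,1]$ at $q_i = 1/2$ with value $1/4$. Together these give $\Ex{\bigl((q_i - B_i)/p\bigr)^2} \le 1/(4p^2)$.

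The final step invokes independence of the Bernoullis $B_1,\ldots,B_n$ across $i \in [n]$, which is implicit in the statement since each $y_i$ is formed from its own independent draw. Because $\sum_{i=1}^n (q_i - B_i)/p$ is then a sum of independent mean-zero random variables, its second moment is just the sum of the per-coordinate variances, which is at most $n/(4p^2)$, as claimed.

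I do not expect any real obstacle here. The only step that requires any care at all is the algebraic identity $x_i - y_i/p = (q_i - B_i)/p$, which is a one-line calculation once $x_i$ is expanded in terms of $\flr{x_i p}$ and its fractional remainder at scale $p$. No concentration inequalities, tail bounds, or subgaussian machinery are needed; the argument is purely elementary bias/variance bookkeeping for independent Bernoullis.
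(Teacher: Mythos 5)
Your proof is correct and is the standard argument for this randomized-rounding variance bound; the paper itself cites the lemma from Balle et al.\ without reproducing a proof, and what you have written is exactly the elementary mean-zero, independent-Bernoulli variance calculation one would expect there. No gaps or issues.
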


\subsection{Related Work}
To amplify the privacy in the shuffle model, the trusted shuffler is the key component of the shuffle model, which in some sense only shifts the point of vulnerability from the curator to the shuffler, particularly in the case where the shuffler may be colluding with the curator. 
Hence among the various relaxations for distributed DP protocols, e.g.~\cite{BalleKMTT20,CheuY23}, the DO shuffle model has been recently proposed~\cite{ShiW21,GordonKLX22} to permit some differentially private leakage in the shuffling stage, called a DO shuffle.  
In fact, \cite{ShiW21,GordonKLX22} showed that DO-shuffling can be more eﬀicient to achieve than a fully anonymous shuffle while \cite{ZhouS22,ZhouSCM23} showed that locally private protocols can be used in conjunction with a DO shuffler to achieve almost the same privacy amplification bounds as with a fully anonymous shuffler, up to a small additive loss resulting from the DO shuffle. 
However, the best known results in the shuffle model of DP do not utilize LDP protocols, and thus cannot directly be applied in the framework of \cite{ZhouS22,ZhouSCM23}.

\section{A Simple Reduction}
In this section, we briefly describe a simple reduction for showing amplification of privacy for imperfect shuffling. 
The result can be viewed as in the same spirit as similar privacy amplification statements, e.g., \cite{FeldmanMT21,ZhouS22,FeldmanMT23}, but for imperfect shuffling. 
In particular, the following well-known result achieves privacy amplification for local randomizers in the shuffle model:

\begin{theorem}
\cite{FeldmanMT21}
\thmlab{thm:shuffle:priv:amp}
For any domain $\calD$ and $i\in[n]$, let $\calR^{(i)}:\calX^{(1)}\times\ldots\times\calX^{(i-1)}\times\calD\to\calX^{(i)}$, where $\calX^{(i)}$ is the range space of $\calR^{(i)}$, such that $\calR^{(i)}(z_{1:i-1},\cdot)$ is an $\eps_0$-DP local randomizer for all values of auxiliary inputs $z_{1:i-1}\in\calX^{(1)}\times\ldots\times\calX^{(i-1)}$. 
Let $\calA_s:\calD^n\to\calX^{(1)}\times\ldots\times\calX^{(n)}$ be the algorithm that given a dataset $x_{1:n}\in\calD^n$, samples a uniform random permutation $\pi$ over $[n]$ and sequentially computes $z_i=\calR^{(i)}(z_{1:i-1},x_{\pi(i)})$ for $i\in[n]$ and outputs $z_{1:n}$. 
Then for any $\delta\in[0,1]$ such that $\eps_0\le\log\left(\frac{n}{16\log(2/\delta)}\right)$, $\calA_s$ is $(\eps,\delta)$-DP for
\[\eps\le\log\left(1+\frac{e^{\eps_0}-1}{e^{\eps_0}+1}\left(\frac{8\sqrt{e^{\eps_0}\log(4/\delta)}}{\sqrt{n}}\right)\right).\]
\end{theorem}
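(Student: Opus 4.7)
The plan is to adapt the ``hiding among the clones'' argument of Feldman, McMillan, and Talwar. Fix neighboring datasets $x_{1:n},x'_{1:n}$ differing only at some index $i^*$, and consider the uniform random permutation $\pi$ drawn by $\calA_s$. The high-level idea is that the $\eps_0$-LDP guarantee lets the output of every non-differing user be ``faked'' with some probability to look like an output produced from either $x_{i^*}$ or $x'_{i^*}$. After shuffling, the differing user hides among these clones, and the remaining privacy loss is governed by how distinguishable two nearly identical binomial counts are.

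The first step is a mixture decomposition. For any inputs $z,a,b\in\calD$ and any prefix $z_{1:i-1}$, the $\eps_0$-LDP property of $\calR^{(i)}(z_{1:i-1},\cdot)$ gives the pointwise inequality
\[
\calR^{(i)}(z_{1:i-1},z)(y)\;\geq\;e^{-\eps_0}\cdot\tfrac{1}{2}\bigl(\calR^{(i)}(z_{1:i-1},x_{i^*})(y)+\calR^{(i)}(z_{1:i-1},x'_{i^*})(y)\bigr).
\]
I would use this to couple the executions of $\calA_s$ on the two datasets inductively along the processing order: at each position whose shuffled input is not $x_{i^*}$, an independent Bernoulli of parameter $e^{-\eps_0}$ decides whether that position becomes a \emph{clone} (its output drawn from the equal mixture $\tfrac{1}{2}\calR^{(i)}(z_{1:i-1},x_{i^*})+\tfrac{1}{2}\calR^{(i)}(z_{1:i-1},x'_{i^*})$, with an internal fair coin choosing which) or contributes a residual whose conditional distribution is identical under $x_{1:n}$ and $x'_{1:n}$. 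The main obstacle will be threading this coupling through the adaptive filtration generated by the previous outputs $z_{1:i-1}$, so the set $\calC$ of clones is only a conditional binomial given the past; this requires that the Bernoulli and the mixture choice be generated freshly at each step and that the permutation $\pi$ be independent of the clone flags.

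Once the coupling is in place, condition on $\calC$, on the positions of the non-clones, and on the residual outputs. Because residuals have identical conditional distributions under the two datasets, they are handled by post-processing. The adversary's distinguishing view then reduces to the multiset of $|\calC|+1$ outputs in which every clone is an independent fair coin between $\calR(\cdot,x_{i^*})$ and $\calR(\cdot,x'_{i^*})$ and the single ``real'' $i^*$-sample is the only label that differs across the two datasets; after the uniform shuffle, the sufficient statistic is the number of $x_{i^*}$-labeled samples, distributed as $\mathrm{Bin}(|\calC|,1/2)+1$ under $x_{1:n}$ and as $\mathrm{Bin}(|\calC|,1/2)$ under $x'_{1:n}$. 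A Chernoff bound, together with the assumption $\eps_0\le\log(n/(16\log(2/\delta)))$, yields $|\calC|\ge (n-1)e^{-\eps_0}/2$ except with probability at most $\delta/2$, and a Hoeffding bound gives concentration of $\mathrm{Bin}(|\calC|,1/2)$ around $|\calC|/2$ within $O(\sqrt{|\calC|\log(1/\delta)})$ except with probability $\delta/2$. On the resulting good event, the pointwise log-likelihood ratio $\log(k/(|\calC|-k+1))$ between the two shifted binomials is bounded by $\log\!\bigl(1+\tfrac{e^{\eps_0}-1}{e^{\eps_0}+1}\cdot 8\sqrt{e^{\eps_0}\log(4/\delta)/n}\bigr)$, and absorbing the two exceptional events into the overall $\delta$ gives the stated $(\eps,\delta)$-DP bound.
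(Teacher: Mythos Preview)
The paper does not prove this theorem at all: it is quoted verbatim as a result of Feldman, McMillan, and Talwar~\cite{FeldmanMT21} and used only as background for the simple reduction in Section~2. There is therefore no ``paper's own proof'' to compare your proposal against.

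That said, your sketch is a faithful outline of the hiding-among-the-clones argument in~\cite{FeldmanMT21}. The mixture decomposition, the binomial reduction, and the final likelihood-ratio calculation are all the right ingredients. The point you flag as ``the main obstacle'' is indeed the delicate part: in the adaptive setting the outputs at the clone positions are not exchangeable i.i.d.\ labels, since each $\calR^{(i)}$ depends on the entire prefix $z_{1:i-1}$, and the differing user's position in the processing order is revealed by the ordering of the outputs rather than hidden by a post-shuffle. Feldman--McMillan--Talwar handle this via a careful sequential coupling and a swap argument showing that, conditioned on the clone flags and the residual outputs, the joint distribution of the clone outputs together with the differing user's output is invariant under permuting the differing user with any clone; only then does the reduction to $\mathrm{Bin}(|\calC|,1/2)$ versus $\mathrm{Bin}(|\calC|,1/2)+1$ go through. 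Your sketch gestures at this but does not actually carry it out, and the sentence ``after the uniform shuffle, the sufficient statistic is the number of $x_{i^*}$-labeled samples'' is where the real work would have to happen in the adaptive case.
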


We would like to show privacy amplification statements for the imperfect shuffle model that are qualitatively similar to \thmref{thm:shuffle:priv:amp}. 
To that end, we first recall the following definition of differentially oblivious shufflers. 

\begin{definition}[Differentially Oblivious Shuffle, e.g., \cite{ChanCMS22,ZhouSCM23}]
A shuffle protocol is $(\eps,\delta)$-differentially oblivious if for all adversaries $\calV$, all $\pi,\pi'\in\Pi$, and all subsets $S$ of the view space,
\[\PPr{\View^{\calV}(\pi)\in S}\le e^{\eps\cdot\Swap(\pi,\pi')}\PPr{\View^{\calV}(\pi')\in S}+\delta.\]
\end{definition}

\cite{ZhouS22} showed that differentially oblivious shufflers also amplify privacy. 

\begin{theorem}[Theorem 1 in \cite{ZhouS22}]
\thmlab{thm:do:shuffle:amplify}
For any domain $\calD$ and range space $\calX$, $i\in[n]$, let $\calR^{(1)},\ldots,\calR^{(n)}:\calD\to\calX$ be $\eps_0$-DP local randomizers and let $\calA_s$ be a $(\eps_1,\delta_1)$-DO shuffler. 
Then the composed protocol $\calA_s(\calR^{(1)},\ldots,\calR^{(n)})$ is $(\eps+\eps_1,\delta+\delta_1)$-DP for
\[\eps=\O{\frac{(1-e^{\eps_0})e^{\eps_0/2}\sqrt{\log(1/\delta)}}{\sqrt{n}}}.\]
\end{theorem}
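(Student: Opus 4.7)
The bound in this theorem---which matches the uniform-shuffler amplification of \thmref{thm:shuffle:priv:amp} up to an additive $(\eps_1,\delta_1)$---signals a proof that reduces to the uniform shuffling case, with the DO shuffler's leakage absorbed as a single additive term. My plan is a hybrid argument of this form.

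Fix neighboring datasets $x,x'$ differing at coordinate $i^*$ and let $Q_y := \calA_s(\calR^{(1)},\ldots,\calR^{(n)})(y)$ denote the output distribution of the composed protocol. Introduce the hybrid $P_y := \calU(\calR^{(1)},\ldots,\calR^{(n)})(y)$ obtained by substituting a uniform shuffler $\calU$ for $\calA_s$. By \thmref{thm:shuffle:priv:amp}, $P_x \approx_{(\eps,\delta)} P_{x'}$ for the stated $\eps$. The remaining step is to show $Q_y \approx_{(\eps_1,\delta_1)} P_y$ for each $y \in \{x,x'\}$, and then to chain the three approximate-DP inequalities into the claimed $(\eps+\eps_1, \delta+\delta_1)$-DP bound. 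To avoid paying the DO cost twice in the chaining, I would exploit the distributional identity $\calU \circ \calA_s \stackrel{d}{=} \calU$ (the convolution of any distribution on $S_n$ with the uniform distribution on $S_n$ is uniform), which lets one rewrite $P_y$ as a specific post-processing of $\calA_s(\calR(y))$; the DO guarantee then contributes a one-shot $(\eps_1,\delta_1)$ gap between $Q_y$ and $P_y$ rather than one per direction of comparison.

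The hard part will be justifying this single $(\eps_1,\delta_1)$ cost without incurring a multiplicative penalty in the swap distance. Naively, the DO definition only gives $(\eps_1\cdot\Swap(\pi,\pi'),\delta_1)$-closeness for each fixed pair of input permutations, and the expected swap distance between the identity and a uniformly random $\pi$ is $\Theta(n)$, so a direct average over $\pi \sim \calU$ of the bound $e^{\eps_1 \cdot \Swap(e,\pi)}$ blows up. The resolution is to use the ``for all adversaries'' quantifier in the DO definition: one chooses the adversary to implement the uniform post-shuffle on the shuffler's output and compare, so that the DO guarantee is invoked along a coupling path on which only one step of swap-distance one is actually required, rather than separately for every permutation in the support of $\calU$. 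Making this coupling rigorous---so that the $(\eps_1,\delta_1)$ overhead enters exactly once and matches the shape of the local-randomizer amplification argument underlying \thmref{thm:shuffle:priv:amp}---is the technical crux of the proof.
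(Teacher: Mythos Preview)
This theorem is not proved in the paper; it is quoted verbatim as ``Theorem 1 in \cite{ZhouS22}'' and used as a black box. So there is no in-paper proof to compare your proposal against.

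That said, your sketch has a real gap. The hybrid step ``$Q_y \approx_{(\eps_1,\delta_1)} P_y$'' is not what the DO definition provides. The DO guarantee compares $\View^{\calV}(\pi)$ with $\View^{\calV}(\pi')$ for two \emph{input orderings} to the \emph{same} shuffler; it says nothing about the distance between a DO shuffler and a uniform shuffler on the \emph{same} input. Your attempted rescue via $\calU\circ\calA_s \stackrel{d}{=} \calU$ does not help: if the adversary you plug into the DO definition simply post-composes with a uniform shuffle, then its view is identical for \emph{all} input orderings, so the DO bound becomes vacuous ($0\le 0+\delta_1$) and yields no relation between $Q_y$ and $P_y$. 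And as you correctly note, any direct comparison along a path of transpositions from the DO permutation to a uniform one costs $\Theta(n)\cdot\eps_1$, not $\eps_1$.

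The argument that actually yields a single additive $(\eps_1,\delta_1)$ does not hybrid through the uniform shuffler at all. It opens up the amplification proof (as in \thmref{thm:shuffle:priv:amp}) and observes that the only place the shuffler's symmetry is used is in a \emph{single} transposition: for neighboring datasets differing at index $i^*$, one swaps position $i^*$ with a ``clone'' position $j$ in the input to the shuffler. Under a uniform shuffler this swap is free; under an $(\eps_1,\delta_1)$-DO shuffler it costs exactly $(\eps_1,\delta_1)$ by the definition with $\Swap(\pi,\pi')=1$. The rest of the privacy-amplification calculus is unchanged, so the final parameters are $(\eps+\eps_1,\delta+\delta_1)$. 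Your last paragraph gestures at ``a coupling path on which only one step of swap-distance one is actually required,'' which is the right instinct, but the place that single swap lives is inside the amplification argument itself, not in a comparison between $Q_y$ and $P_y$.
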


It turns out that imperfect shufflers can be parametrized by differentially oblivious shufflers. 
That is, imperfect shufflers are a specific form of differentially oblivious shufflers. 
Therefore, we can immediately apply the previous statement to obtain the following statement for privacy amplification for imperfect shufflers. 

\begin{theorem}
For any domain $\calD$ and range space $\calX$, $i\in[n]$, let $\calR^{(1)},\ldots,\calR^{(n)}:\calD\to\calX$ be $\eps_0$-DP local randomizers and let $\calA_s$ be a $\gamma$-imperfect shuffler. 
Then the composed protocol $\calA_s(\calR^{(1)},\ldots,\calR^{(n)})$ is $(\eps+\gamma,\delta)$-DP for
\[\eps=\O{\frac{(1-e^{\eps_0})e^{\eps_0/2}\sqrt{\log(1/\delta)}}{\sqrt{n}}}.\]
\end{theorem}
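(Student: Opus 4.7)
The plan is to derive this theorem as a direct corollary of \thmref{thm:do:shuffle:amplify} by showing that every $\gamma$-imperfect shuffler is automatically a $(\gamma,0)$-differentially oblivious shuffler. Once that reduction is in place, substituting $\eps_1=\gamma$ and $\delta_1=0$ into \thmref{thm:do:shuffle:amplify} yields the stated $(\eps+\gamma,\delta)$-DP guarantee immediately, and the overall $\eps$ bound is inherited unchanged from \thmref{thm:do:shuffle:amplify}.

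For the reduction itself, I would model the shuffler as sampling a permutation $\sigma$ from the distribution $\calS$ and then rearranging an input sequence originally ordered by $\pi\in\Pi$ into the output ordering $\sigma\circ\pi$. Since any adversary's view is a (possibly randomized) post-processing of $\sigma\circ\pi$, it suffices by data processing to bound the output distribution itself. Fix $\pi,\pi'\in\Pi$ with $\Swap(\pi,\pi')=k$ and any target $w\in\Pi$; then
\[\PPr{\sigma\circ\pi=w}=\PPr{\calS=w\pi^{-1}},\qquad \PPr{\sigma\circ\pi'=w}=\PPr{\calS=w\pi'^{-1}}.\]
The key algebraic observation is that $(w\pi^{-1})(w\pi'^{-1})^{-1}=w(\pi^{-1}\pi')w^{-1}$, which is a conjugate of $\pi^{-1}\pi'$ and therefore has the same cycle type. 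Because the minimum number of transpositions required to factor a permutation depends only on its cycle type (it equals $n$ minus the number of cycles, fixed points included), conjugation preserves swap distance, giving $\Swap(w\pi^{-1},w\pi'^{-1})=\Swap(\pi,\pi')=k$. Now \defref{def:imp:shuffle} applied to the pair $(w\pi^{-1},w\pi'^{-1})$ yields $\PPr{\calS=w\pi^{-1}}\le e^{\gamma k}\PPr{\calS=w\pi'^{-1}}$, and summing this pointwise inequality over any subset $T$ of outputs produces the desired $(\gamma,0)$-DO bound $\PPr{\sigma\circ\pi\in T}\le e^{\gamma\cdot\Swap(\pi,\pi')}\PPr{\sigma\circ\pi'\in T}$.

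I do not foresee any real obstacle: the entire argument is essentially a dictionary translation between the imperfect-shuffler vocabulary of \defref{def:imp:shuffle} and the DO-shuffler vocabulary, hinging only on the elementary fact that conjugation in the symmetric group preserves cycle structure and hence preserves swap distance. The $+\gamma$ term in the final privacy parameter is exactly the $\eps_1=\gamma$ contribution from \thmref{thm:do:shuffle:amplify}, and the absence of any additional failure probability in the conclusion reflects that our reduction achieves $\delta_1=0$.
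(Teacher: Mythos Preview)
Your proposal is correct and follows essentially the same approach as the paper: show that a $\gamma$-imperfect shuffler is a $(\gamma,0)$-DO shuffler and then invoke \thmref{thm:do:shuffle:amplify} with $\eps_1=\gamma$, $\delta_1=0$. The paper argues the reduction in one line (``since no additional information is leaked by the shuffler''), whereas you spell out the pointwise inequality via the observation that $\Swap(w\pi^{-1},w\pi'^{-1})=\Swap(\pi,\pi')$ because conjugation preserves cycle type; this is a more explicit justification of the same step, not a different strategy.
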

\begin{proof}
By the definition of $\gamma$-imperfect shuffle, we have that for all $\pi,\pi'\in\Pi$,
\[\PPr{\calS=\pi}\le e^{\gamma \cdot \Swap(\pi,\pi')}\PPr{\calS=\pi'}.\]
Since no additional information is leaked by the shuffler, then for all adversaries $\calV$ and all subsets $S$ of the view space,
\[\PPr{\View^{\calV}(\pi)\in S}\le e^{\gamma\cdot\Swap(\pi,\pi')}\PPr{\View^{\calV}(\pi')\in S}.\]
In other words, the $\gamma$-imperfect shuffler is a $(\gamma,0)$-DO shuffler. 
Thus by \thmref{thm:do:shuffle:amplify}, the composed protocol $\calA_s(\calR^{(1)},\ldots,\calR^{(n)})$ is $(\eps+\gamma,\delta)$-DP for
\[\eps=\O{\frac{(1-e^{\eps_0})e^{\eps_0/2}\sqrt{\log(1/\delta)}}{\sqrt{n}}}.\]
\end{proof}

\section{Differentially Private Summation}
In this section, we first introduce the structural statements necessary to argue privacy for the standard split-and-mix protocol~\cite{IshaiKOS06}. 
We then assume correctness of these statements, deferring their proofs to subsequent sections, and we prove the guarantees of \thmref{thm:main}. 
We also give an application to private vector aggregation as a simple corollary of \thmref{thm:main}. 

We first relate differentially private protocols for summation under a $\gamma$-imperfect shuffler to $\sigma$-secure protocols. 
Lemma 4.1 in \cite{BalleBGN19} showed this relationship for uniform shufflers. 
It turns out their proof extends to $\gamma$-imperfect shufflers as well. 
For the sake of completeness, we include the proof in \appref{app:extra:proofs}. 
\begin{restatable}{lemma}{lemsecuretodp}[Lemma 4.1 in~\cite{BalleBGN19}]
\lemlab{lem:secure:to:dp}
Given a $\sigma$-secure protocol $\Xi$ in the $\gamma$-I-shuffle model for $n$-party private summation on $\mathbb{Z}_q$ such that each player sends $f(n,q,\sigma)$ bits of messages, there exists a $(\eps,(1+e^\eps)2^{-\sigma-1})$-differentially private protocol in the $\gamma$-I-shuffle model for $n$-party private summation on real numbers with expected absolute error $\O{\frac{1}{\eps}}$ such that each player sends $f(n,O(n^{3/2}),\sigma)$ bits of messages. 
\end{restatable}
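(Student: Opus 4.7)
The plan is to construct the DP protocol for real summation from the given $\sigma$-secure protocol $\Xi$ on $\mathbb{Z}_q$ with $q=\lceil 2n^{3/2}\rceil$ via the standard randomized-rounding-plus-noise template used in~\cite{BalleBGN19,BalleBGN20}. Each user $i$ takes their real input $x_i\in[0,1]$, applies randomized rounding with precision $p=\lceil\sqrt{n}\rceil$ to obtain $y_i=\lfloor x_ip\rfloor+\Ber(x_ip-\lfloor x_ip\rfloor)$, independently samples two Polya noise shares $u_i,v_i\sim\Polya\!\left(\tfrac{1}{n},e^{-\eps}\right)$, and feeds $z_i:=y_i+u_i-v_i \pmod q$ into $\Xi$. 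The analyzer takes $\Xi$'s output $\hat s\in\mathbb{Z}_q$, lifts it to its representative in $(-q/2,q/2]$, and reports $\hat s/p$ as the estimate of $\sum_i x_i$. The number of bits per message is that required by $\Xi$ for inputs in $\mathbb{Z}_q$, i.e.\ $f(n,\O{n^{3/2}},\sigma)$.

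For privacy, the argument factors into two independent pieces. First, by \factref{fact:polya:dlap}, the total injected noise $Z:=\sum_i(u_i-v_i)$ is exactly $\DLap(e^{-\eps})$, so the integer $S:=\sum_i z_i\pmod q$ equals $(\sum_i y_i+Z)\bmod q$. Because randomized rounding changes by at most $1$ when a single real input changes, $\sum_i y_i$ has integer sensitivity $1$ across neighboring datasets, so releasing $S$ is $\eps$-DP by the standard discrete-Laplace mechanism (mod-$q$ reduction is post-processing). Second, by $\sigma$-security, the entire transcript of $\Xi$ on inputs $(z_1,\dots,z_n)$ is within TVD $2^{-\sigma}$ of a function that depends only on $S$; since this bound holds pointwise in $(z_1,\dots,z_n)$, it holds after taking expectation over the users' coins. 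A standard conversion from the joint $(\eps,\text{TVD})$ guarantee to an approximate-DP guarantee (the precise lemma that gives the $(1+e^\eps)/2$ factor) then yields $(\eps,(1+e^\eps)2^{-\sigma-1})$-DP for the full protocol. Crucially, the privacy argument is agnostic to what kind of shuffler is used inside $\Xi$, so imperfect shuffling enters only via the security assumption on $\Xi$.

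For utility, by \lemref{lem:round:err} the rounding error satisfies $\Ex{(\sum_i x_i-\sum_i y_i/p)^2}\le n/(4p^2)\le 1/4$, contributing $\O{1}\le\O{1/\eps}$ to the expected absolute error. The added noise contributes $\Ex{|Z|/p}=\O{1/\eps}$ since $\Ex{|Z|}=\O{1/\eps}$ for $Z\sim\DLap(e^{-\eps})$ and $p\ge 1$. Combining via the triangle inequality gives expected absolute error $\O{1/\eps}$, provided the analyzer correctly recovers $\sum_i y_i+Z$ from its value modulo $q$.

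The main obstacle I expect is the modular wraparound: because $Z$ has unbounded tails, the event $|\sum_i y_i+Z|\ge q/2$ has nonzero probability, and on that event the lifted estimate $\hat s/p$ can be wildly wrong. However, $\sum_i y_i\le np\le n^{3/2}=q/2$ deterministically, and a standard tail bound on $\DLap(e^{-\eps})$ shows $\PPr{|Z|\ge q/2-np}$ decays at least geometrically in $q$, hence is at most $\O{n^{-3/2}}$, which is negligible compared to $2^{-\sigma}$ for any security parameter of interest and only multiplies the expected absolute error by a negligible factor. The remaining steps are bookkeeping: verify that the composition of the $\eps$-DP release of $S$ with the $2^{-\sigma}$-TVD security bound yields precisely the claimed $\delta=(1+e^\eps)2^{-\sigma-1}$, and check that the message-length function $f$ transfers unchanged from $q=O(n^{3/2})$ to the real-valued protocol.
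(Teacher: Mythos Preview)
Your overall approach is the same as the paper's: randomized rounding with precision $p\approx\sqrt{n}$, distributed Polya noise summing to a discrete Laplace, feed into $\Xi$, then compare the transcript of $\Xi$ on $(z_1,\dots,z_n)$ to the transcript of $\Xi$ on an input depending only on $S=\sum_i z_i$, which is $\eps$-DP by the discrete-Laplace mechanism. The structure and the communication-cost conclusion are fine.

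There is, however, a genuine error in your sensitivity/noise calibration. You write that ``randomized rounding changes by at most $1$ when a single real input changes,'' and accordingly take the Polya parameter $e^{-\eps}$. This is false: with $y_i=\lfloor x_ip\rfloor+\Ber(x_ip-\lfloor x_ip\rfloor)$ and $x_i\in[0,1]$, the value $y_i$ ranges over $\{0,1,\dots,p\}$, so changing one coordinate of $x$ can change $\sum_i y_i$ by up to $p$. With sensitivity $p$ and noise $\DLap(e^{-\eps})$, you would only get $(p\eps)$-DP, not $\eps$-DP. The paper uses Polya parameter $e^{-\eps/p}$ so that the aggregate noise is $\DLap(e^{-\eps/p})$, which against sensitivity $p$ gives $\eps$-DP; then $\Ex{|Z|}=\O{p/\eps}$ and the scaled error $\Ex{|Z|/p}=\O{1/\eps}$, matching your claimed utility. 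Once you fix the parameter, the rest of your privacy and utility sketch goes through.

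A second, smaller issue: your decoding lifts $\hat s$ to $(-q/2,q/2]$, but $\sum_i y_i$ can be as large as $np\approx n^{3/2}\approx q/2$, so with your centering the ``no wraparound'' margin $q/2-np$ can be essentially zero and the wraparound event is not rare. The paper instead decodes via the asymmetric rule $\widetilde{X}=Z/p$ if $Z\le 3np/2$ and $\widetilde{X}=(Z-q)/p$ otherwise, which centers the decoding interval around $np/2$ and makes the wraparound probability governed by $\PPr{|Z|>np/2}$, which is genuinely small. This is easy to fix but does need fixing.
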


In \secref{sec:split:mix}, we prove the following guarantees about the split-and-mix protocol from~\cite{IshaiKOS06}. 
\begin{restatable}{theorem}{thmsecurecc}
\thmlab{thm:secure:cc}
Let $n\ge 19$ and $\gamma\le\frac{\log\log n}{80}$ be a distortion parameter. 
For worst-case statistical security with parameter $\sigma$, it suffices to use
\[m=\O{e^{4\gamma}+\frac{e^{4\gamma}(\sigma+\log n)}{\log n}}\]
messages. 
Each message uses $\O{\log q}$ bits, for $q=\ceil{2n^{3/2}}$. 
\end{restatable}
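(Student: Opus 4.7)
The plan is to establish the stated bound on $m$ by showing that for every pair of inputs $\vec{\sfX}, \vec{\sfX}' \in \bbZ_q^n$ with $\sum_i x_i = \sum_i x'_i$, the total variation $\TVD(\calP(\vec{\sfX}), \calP(\vec{\sfX}'))$ of the split-and-mix protocol is at most $2^{-\sigma}$. Following the strategy outlined in the techniques overview, I would first use a standard second-moment/collision argument (as in Balle et al.~\cite{BalleBGN20}) to upper bound this TVD by a quantity of the form $\PPr{\vec{\calR}(\vec{\sfX}) = \calS^{-1} \circ \calS' \circ \vec{\calR}'(\vec{\sfX})}$, where $\vec{\calR}, \vec{\calR}'$ are two independent instances of the split-and-mix randomizer and $\calS, \calS'$ are two independent draws from the $m$-parallel $\gamma$-imperfect shuffler. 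This step is essentially the same as in the uniform-shuffle analysis and does not exploit the shuffler's symmetry.

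Next, I would invoke the post-processing property of $\gamma$-imperfect shuffling advertised in the techniques overview to collapse $\calS^{-1} \circ \calS'$ into a single $\gamma$-imperfect shuffler $\widetilde{\calS}$, so the collision event becomes $\vec{\calR}(\vec{\sfX}) = \widetilde{\calS} \circ \vec{\calR}'(\vec{\sfX})$. Conditioning on the realization $\pi$ of $\widetilde{\calS}$, a standard counting argument over $\bbZ_q$ shows that the per-round collision probability, conditioned on the communication graph $G_\pi$ induced by $\pi$, is $q^{-(n - c(G_\pi))}$, where $c(G_\pi)$ is the number of connected components of $G_\pi$. So the task reduces to bounding the expectation of $q^{-(n-c)}$ raised to the $m$-th power under the distribution of the communication graph arising from a single $\gamma$-imperfect shuffle of the $n$ players.

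The heart of the proof is a coupling argument that transfers this expectation from the uniform shuffle setting (where it was analyzed in~\cite{BalleBGN20}) to the $\gamma$-imperfect shuffle setting. For each candidate subset $S \subseteq [n]$, I would bound the probability that $S$ is disconnected from $[n] \setminus S$ in the communication graph by coupling the imperfect shuffler with a uniform one, but using only a subset of size $k \leq \min(|S|, n-|S|)$ as a witness of disconnection. This localizes the multiplicative distortion from $e^{\Omega(n)\gamma}$ down to $e^{k\gamma}$, which is the crucial gain; otherwise the distortion would dwarf the savings from the uniform bound. Combining this per-subset bound, union-bounding over $S$ of a given size, and converting back to an expectation over $c$ yields a tail bound for $c$ only an $e^{O(\gamma)}$ factor worse than in the uniform-shuffle model.

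The main obstacle will be calibrating the choice of $k$ in the coupling so that the $e^{k\gamma}$ loss is absorbed by the decay of the uniform-shuffle disconnection probability across subsets of size $k$, all while ensuring that the final exponent matches the $e^{4\gamma}$ dependence in the theorem statement. With the restriction $\gamma \leq \frac{\log\log n}{80}$ in mind, I would pick parameters so that for $m = \Theta\!\left(e^{4\gamma} + \frac{e^{4\gamma}(\sigma + \log n)}{\log n}\right)$ independent rounds, the product of the per-round bounds drives the collision probability (and hence the TVD after taking the square root) below $2^{-\sigma}$. The threshold $n \geq 19$ is expected to emerge as the smallest $n$ for which the binomial/concentration estimates used inside the coupling remain nontrivial once the $e^{4\gamma}$ slack is paid.
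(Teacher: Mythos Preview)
Your overall strategy matches the paper's: collision bound on the TVD, post-processing to collapse $\calS^{-1}\circ\calS'$ into a single $\gamma$-imperfect shuffler, reduction to connected components of a communication graph, and a coupling argument bounding cut probabilities with the size-$k$ subset trick. The coupling idea and the role of the restriction $\gamma\le\frac{\log\log n}{80}$ are described correctly.

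However, the factorization in your third paragraph is wrong and would not go through as written. The collision event does \emph{not} decompose round-by-round: the $m$ messages $x_{i,1},\ldots,x_{i,m}$ of player $i$ are constrained to sum to $x_i$, so only the first $m-1$ are independent uniform and the last is determined. In particular there is no ``per-round collision probability $q^{-(n-c(G_\pi))}$'' that one can raise to the $m$-th power. What the paper actually shows (\lemref{lem:prob:shuf:exp:graph}) is that, conditioned on the realization of the full $m$-parallel shuffler $\widetilde{\calS}$, the collision probability for all $mn$ coordinates equals $q^{C(G)-mn}$, where $G$ is the graph on the $n$ players whose edges come from \emph{all} $m$ rounds together. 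The $m$-th power you are anticipating appears at a different place: when bounding $\PPr{S\text{ disconnected from }[n]\setminus S}$, the $m$ rounds are independent, so the single-round bound from the coupling (\lemref{lem:connect:one:prob}) is taken to the $m$-th power. This feeds into a recursion for $\PPr{C(G)=c}$ (\lemref{lem:cc:prob}) and then into $\Ex{q^{C(G)}}$ (\lemref{lem:exp:qconn:bound}).

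A smaller omission: the collision/second-moment argument only delivers \emph{average-case} security over uniform inputs with equal sum. The paper spends one extra message to randomize each player's input (\lemref{lem:avg:worst}), which is what converts this to the worst-case $\sigma$-security claimed in the theorem; you should not skip that step.
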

\pasin{I don't understand this $\min\left(\frac{\log\log n}{80},\frac{\log n}{4m}\right)$ term, isn't the first term always smaller than the second term?}
\samson{Oops, yes}
By \lemref{lem:secure:to:dp} and \thmref{thm:secure:cc}, we have our main statement:
\thmmain*

\paragraph{Applications to private vector summation.}
An immediate application of our results is to the problem of private vector aggregation, where $n$ parties have vectors $\vec{x_1},\ldots,\vec{x_n}\in [0,1]^d$ and the goal is to privately compute $\vec{X}=\sum_{i=1}^n\vec{x_i}\in\mathbb{R}^d$. 
Given a protocol $\calP$ for private summation where $n$ players each send $m$ messages, the $n$ players can perform a protocol $\calP'$ for vector aggregation by performing $\calP$ on each of their $d$ coordinates. 
In particular, the $n$ players can first perform $\calP$ on the first coordinate of their vectors, then perform $\calP$ on the second coordinate of their vectors, and so on and so forth, by sending $md$ messages in total. 
Equivalently, the $n$ players can perform $\calP$ on a field of size $q^d$ rather than size $q$ and just send $m$ messages in total. 
However, the total communication size is still the same, because each message increases by a factor of $d$ due to the larger field size. 
Thus we consider the approach where the $n$ players perform $\calP$ on each of the $d$ coordinates. 

To argue privacy, we observe that the $n$ players run $d$ iterations of the protocol $\calP$, once for each of the coordinates. 
By composition of DP, i.e., \thmref{thm:adv:comp}, to guarantee $\eps$-privacy for the overall protocol, it suffices to run each of the $d$ iterations with privacy $\eps'=\frac{\eps}{d}$ and failure probability $\delta'=\frac{\delta}{d}$. 
By post-processing of DP, i.e., \thmref{thm:dp:comp}, the resulting vector where each coordinate is computed using the corresponding protocol is $(\eps,\delta)$-DP. 

Then as a corollary to \thmref{thm:main} with privacy parameter $\eps'=\frac{\eps}{d}$ and failure probability $\delta'=\frac{\delta}{d}$:
\begin{restatable}{theorem}{thmvector}
\thmlab{thm:vector}
Let $n\ge 19$, $d\ge 1$, $\eps>0$ be a (constant) privacy parameter, and $\gamma\le\frac{\log\log n}{80}$ be a distortion parameter. 
Then there exists an $(\eps,\delta)$-DP protocol for vector summation in the $\gamma$-I-shuffle model with expected absolute error $\O{\frac{d}{\eps}}$ per coordinate and 
\[m=\O{d\left(e^{4\gamma}+\frac{e^{4\gamma}(\log\frac{d}{\delta}+\log n)}{\log n}\right)}\]
messages per party. 
Each message uses $\O{\log q}$ bits, for $q=\ceil{2n^{3/2}}$. 
\end{restatable}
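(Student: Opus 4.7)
The plan is to reduce vector summation to $d$ independent invocations of the scalar summation protocol from \thmref{thm:main}, one per coordinate, and then combine the guarantees by standard composition and post-processing.

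First, I would have each party $i \in [n]$ decompose its input $\vec{x_i} \in [0,1]^d$ into its $d$ scalar coordinates $x_{i,1}, \ldots, x_{i,d}$. For each coordinate $j \in [d]$, all $n$ parties would jointly execute the protocol of \thmref{thm:main} on the inputs $(x_{1,j}, \ldots, x_{n,j})$, instantiated with privacy parameter $\eps' = \eps/d$ and failure probability $\delta' = \delta/d$. Since the hypothesis $n \geq 19$ and $\gamma \leq \frac{\log\log n}{80}$ carry over verbatim, the parameter conditions of \thmref{thm:main} are satisfied, and each invocation yields an $(\eps/d, \delta/d)$-DP estimate of $\sum_{i=1}^n x_{i,j}$ with expected absolute error $\O{1/\eps'} = \O{d/\eps}$, using
\[
\O{e^{4\gamma} + \frac{e^{4\gamma}(\log(1/\delta') + \log n)}{\log n}} = \O{e^{4\gamma} + \frac{e^{4\gamma}(\log(d/\delta) + \log n)}{\log n}}
\]
messages per party per coordinate, each of $\O{\log q}$ bits for $q = \ceil{2n^{3/2}}$.

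Next, I would argue overall privacy. Two neighboring datasets at the vector level (differing in one party's vector $\vec{x_i}$) induce neighboring datasets at every coordinate-level instance. Basic composition of DP (\thmref{thm:adv:comp}) across the $d$ independent invocations therefore upgrades the per-coordinate $(\eps/d, \delta/d)$-DP guarantees into an overall $(\eps, \delta)$-DP guarantee for the joint transcript. Concatenating the $d$ scalar estimates into the final vector estimate is a deterministic map, so post-processing of DP (\thmref{thm:dp:comp}) preserves $(\eps, \delta)$-DP for the final output.

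Finally, I would sum the per-coordinate resource bounds across the $d$ instances: the per-coordinate error bound $\O{d/\eps}$ is inherited directly, and the message count per party becomes $d$ times the per-coordinate message count, matching the claimed expression for $m$. There is no substantive technical obstacle in this argument — the whole proof is a bookkeeping exercise on top of \thmref{thm:main} — so the only mild care needed is to confirm that the logarithmic term $\log(1/\delta')$ absorbs cleanly into $\log(d/\delta)$ after the $\delta/d$ split, which it does since $\log(d/\delta) = \log d + \log(1/\delta)$.
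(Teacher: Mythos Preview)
Your proposal is correct and follows essentially the same approach as the paper: run the scalar protocol of \thmref{thm:main} on each of the $d$ coordinates with parameters $\eps' = \eps/d$ and $\delta' = \delta/d$, apply basic composition (\thmref{thm:adv:comp}) to obtain the overall $(\eps,\delta)$-DP guarantee, invoke post-processing (\thmref{thm:dp:comp}) for the final vector, and multiply the per-coordinate message count by $d$. The paper's argument is exactly this bookkeeping, so there is nothing to add.
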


We remark that for certain regimes of $\eps$ and $\delta$, \thmref{thm:vector} can be easily improved by using \thmref{thm:main} and advanced composition, rather than basic composition. 

\section{Security of Split-and-Mix Protocol}
\seclab{sec:split:mix}
In this section, we prove the $\sigma$-security of the split-and-mix protocol. 
The proof largely attempts to follow the outline of the split-and-mix protocol analysis for private aggregation by~\cite{BalleBGN19}, which first reduces from worst-case input to average-case input and then analyzes the connectivity of the resulting communication graph induced by a uniform shuffle. 

We similarly first reduce from worst-case input to average-case input and then analyze the connectivity of the resulting communication graph induced by a $\gamma$-imperfect shuffle. 
The former appears in \secref{sec:worst:average} and the latter appears in \secref{sec:reduce:cc}. 

However, the main challenge is that the symmetric properties of the uniform shuffler is often crucially utilized in various steps of the approach. 
Unfortunately, these properties do not often seem to translate to $\gamma$-imperfect shufflers, where we might not even know the mass that is placed on each permutation. 
Thus we need to handle a number of technical challenges to recover qualitatively similar structural properties to the uniform shuffling model. 
Along the way, we show that the composition of two shufflers, where the inner shuffler is a $\gamma$-imperfect shuffler, is also a $\gamma$-imperfect shuffler with the same parameter, which can be interpreted as a post-processing statement for $\gamma$-imperfect shuffling. 

We first formally define the split-and-mix protocol:
\begin{definition}[Split-and-Mix Protocol, e.g., \cite{IshaiKOS06}]
Given an integer parameter $m\ge 1$, the $m$-message $n$-player split-and-mix protocol $\calP_{m,n}$ is defined as follows. 
Each player $i$ outputs a set of $m$ messages $x_{i,1},\ldots,x_{i,m}$ uniformly at random conditioned on $x_{i,1}+\ldots+x_{i,m}=x_i$. 
For each $j\in[m]$, the set of messages $x_{1,j},\ldots,x_{n,j}$ are then swapped according to a $\gamma$-imperfect shuffler $\calS^{(j)}$.
\end{definition}

\subsection{Worst-case to Average-case Reduction}
\seclab{sec:worst:average}
In this section, we show a reduction from worst-case input to average-case input. 
In other words, rather than analyze the split-and-mix protocol over the worst-case input, we show it suffices to analyze the expected performance of the split-and-mix protocol for a uniformly random input. 
The approach is nearly identical to that of \cite{BalleBGN20}, but they can further simplify their final expression due to the symmetric properties of the uniform shuffler, which do not hold for the $\gamma$-imperfect shuffler. 

Let $\calP_{m,n}$ denote the $m$-message $n$-player split-and-mix protocol and let $\tilde{\calP}_{m,n}$ be defined as follows. 
Each player $i$ outputs a set of $m+1$ messages $x_{i,1},\ldots,x_{i,m+1}$ uniformly at random conditioned on $x_{i,1}+\ldots+x_{i,m+1}=x_i$. 
For each $i\in[n]$, we use the notation $\calR_m(x_i)=(x_{i,1},\ldots,x_{i,m})$ to denote the choice of the $m$ messages for player $i$. 
Let $\mathbb{G}=\mathbb{F}_q$ and for $j\in[m]$, let $\calS^{(j)}:\mathbb{G}^n\to\mathbb{G}^n$ be independent shufflers. 
Then the output of $\tilde{\calP}_{m,n}$ is $\calS^{(j)}$ applied to the first $m$ messages of each player, concatenated with the unshuffled final message of each player, i.e.,
\[\tilde{\calP}_{m,n}(x_1,\ldots,x_n)=\calS^{(1)}(x_{1,1},\ldots,x_{n,1})\circ\ldots\circ\calS^{(m)}(x_{1,m},\ldots,x_{n,m})\circ x_{1,m+1},\ldots,x_{n,m+1}.\]

We first reduce the problem to average-case statistical security using the approach of Lemma 6.1 in \cite{BalleBGN20}. 
Formally, we say that a protocol $\calP_{m,n}$ provides average-case statistical security with parameter $\sigma$ if
\[\mathbb{E}_{\vec{\sfX},\vec{\sfX}'}[\TVD_{\mid\vec{\sfX},\vec{\sfX}'}(\calP_{m,n}(\vec{\sfX}),\calP_{m,n}(\vec{\sfX}'))]\le 2^{-\sigma},\]
where $\vec{\sfX}$ and $\vec{\sfX}'$ are each drawn uniformly at random from all pairs of vectors in $\mathbb{G}^n$ with the same sum. 
Here we use the notation $\TVD_{\mid\vec{\sfX},\vec{\sfX}'}$ to denote the total variation distance between two distributions conditioned on fixings of $\vec{\sfX}$ and $\vec{\sfX}'$.
\begin{lemma}
\lemlab{lem:avg:worst}
Suppose $\calP_{m,n}$ provides average-case statistical security with parameter $\sigma$, then $\calP_{m+1,n}$ and $\tilde{\calP}_{m,n}$ provide worst-case statistical security with parameter $\sigma$.
\end{lemma}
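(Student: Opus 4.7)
My plan is to reduce in two steps: first, exhibit $\calP_{m+1,n}$ as a post-processing of $\tilde{\calP}_{m,n}$; second, prove worst-case security of $\tilde{\calP}_{m,n}$ directly from the average-case security of $\calP_{m,n}$. For the post-processing step, I would observe that a sample from $\calP_{m+1,n}(\vec{X})$ can be drawn from a sample of $\tilde{\calP}_{m,n}(\vec{X})$ by applying a fresh independent $\gamma$-imperfect shuffler to its last (currently unshuffled) column. Since total variation distance cannot increase under post-processing, any worst-case security bound for $\tilde{\calP}_{m,n}$ descends automatically to $\calP_{m+1,n}$, so it suffices to prove the bound for $\tilde{\calP}_{m,n}$.

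For $\tilde{\calP}_{m,n}(\vec{X})$, each player's $(m+1)$-th share is uniform in $\mathbb{G}$ marginally (the shares are uniform on the hyperplane summing to $x_i$), so the unshuffled last column $\vec{Y}$ is uniform on $\mathbb{G}^n$ independently of the input; conditioned on $\vec{Y}=\vec{y}$, the first $m$ shuffled columns are distributed exactly as $\calP_{m,n}(\vec{X}-\vec{y})$. Because $\tilde{\calP}_{m,n}(\vec{X})$ and $\tilde{\calP}_{m,n}(\vec{X}')$ share this marginal on $\vec{Y}$, the total variation decomposes exactly as
\[
\TVD(\tilde{\calP}_{m,n}(\vec{X}),\tilde{\calP}_{m,n}(\vec{X}'))
=\mathbb{E}_{\vec{Y}\sim U(\mathbb{G}^n)}\!\left[\TVD(\calP_{m,n}(\vec{X}-\vec{Y}),\calP_{m,n}(\vec{X}'-\vec{Y}))\right].
\]
Reparameterizing via $\vec{Z}=\vec{X}-\vec{Y}$ (uniform on $\mathbb{G}^n$) and $\vec{C}=\vec{X}'-\vec{X}$ (sum-zero, since $\vec{X}$ and $\vec{X}'$ share a coordinate sum), the right-hand side equals $W(\vec{C}):=\mathbb{E}_{\vec{Z}}[\TVD(\calP_{m,n}(\vec{Z}),\calP_{m,n}(\vec{Z}+\vec{C}))]$. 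Conditioning the uniform sampling of $(\vec{\sfX},\vec{\sfX}')$ on the sum-zero difference $\vec{\sfX}'-\vec{\sfX}$ rewrites the average-case hypothesis as $\mathbb{E}_{\vec{C}'\sim U(\text{sum-zero})}[W(\vec{C}')]\le 2^{-\sigma}$.

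The remaining step passes from this average over $\vec{C}'$ to a pointwise bound on $W(\vec{C})$. Applying the triangle inequality for total variation yields $W(\vec{C})\le W(\vec{C}')+W(\vec{C}-\vec{C}')$ for every sum-zero $\vec{C}'$ (with a change of variable $\vec{Z}\mapsto\vec{Z}+\vec{C}'$ in the expectation defining the second summand); averaging over $\vec{C}'$ uniformly in the sum-zero subspace and using the translation symmetry $\mathbb{E}_{\vec{C}'}[W(\vec{C}-\vec{C}')]=\mathbb{E}_{\vec{C}'}[W(\vec{C}')]$ collapses the bound to $W(\vec{C})\le 2\,\mathbb{E}_{\vec{C}'}[W(\vec{C}')]\le 2^{-\sigma+1}$, within a constant factor of the target that is absorbed into the statistical-security parameter. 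The main obstacle I anticipate is precisely this last step: in the perfect-shuffle setting of~\cite{BalleBGN20}, the permutation symmetry of the uniform shuffler forces $W(\vec{C})$ to depend only on the multiset of entries of $\vec{C}$, making the average-to-worst reduction essentially free, whereas under a $\gamma$-imperfect shuffler this symmetry is unavailable and one must route through an argument like the triangle-inequality-plus-translation-averaging sketched above.
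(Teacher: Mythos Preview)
Your proof is correct and follows the same two-step skeleton as the paper: post-process $\tilde{\calP}_{m,n}$ into $\calP_{m+1,n}$ by shuffling the last column, then observe that the unshuffled $(m{+}1)$-th share randomizes each input so that, conditioned on the last column $\vec{Y}$, the first $m$ shuffled columns are distributed as $\calP_{m,n}(\vec{X}-\vec{Y})$. Where you go further than the paper is in the final passage from the average-case hypothesis (an average over \emph{all} same-sum pairs, equivalently over a uniformly random sum-zero difference $\vec{C}'$) to the worst-case bound (a fixed difference $\vec{C}=\vec{X}'-\vec{X}$). The paper's proof simply ends with the expectation conditioned on the fixed difference $\vec{x}-\vec{x'}$ and does not explicitly bridge this gap; in the paper's context this is harmless because the subsequent \lemref{lem:avg:ub} in fact bounds $W(\vec{C})$ uniformly in $\vec{C}$ (the bound there routes through a comparison to a uniform $\vec{\sfV}$ that does not depend on $\vec{\sfX}'$ at all). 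Your triangle-inequality-plus-translation argument is a clean, self-contained way to close the step directly from the stated average-case hypothesis, at the cost of one bit in the security parameter, which is indeed immaterial for the downstream application.
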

\begin{proof}
Let $\vec{x}$ and $\vec{x'}$ be a pair of vectors in $\mathbb{G}^n$ with the same sum. 
Given an output of $\tilde{\calP}_{m,n}(\vec{x})$, the protocol $\calP_{m+1,n}(\vec{x})$ can be simulated by using an additional application of $\calR_{m+1}$ to randomly permute the last message of each of the players according to the distribution of the $\gamma$-imperfect shuffle. 
Hence, 
\[\TVD(\calP_{m+1,n}(\vec{x}),\calP_{m+1,n}(\vec{x'}))\le\TVD(\tilde{\calP}_{m,n}(\vec{x}),\tilde{\calP}_{m,n}(\vec{x'})).\]
It thus suffices to upper bound the worst-case statistical security of $\tilde{\calP}_{m,n}$ by $\sigma$. 

The worst-case security of $\tilde{\calP}_{m,n}$ is reduced to the average-case security of $\tilde{\calP}_{m,n}$ by noting that the addition of the $(m+1)$-th message to each player can effectively be viewed as adding a random value to each player's input and thus transforming each input value $x_i$ into a uniformly random value in $\mathbb{G}$. 
More formally, consider the definition
\[\calR_{m+1}(x)=(\calR_m(x-\sfU),\sfU),\]
for $x\in\mathbb{G}$, where $\sfU$ is a uniformly random element of $\mathbb{G}$. 

Since $\vec{x}-\vec{\sfU}$ is a uniformly random vector in $\mathbb{G}^n$, then we can couple the randomness observed from two instances $\vec{\sfU},\vec{\sfU'}$ resulting from two independent executions of $\calP_{m, n}$ with two inputs having the same sum. 
Therefore,
\begin{align*}
\TVD(\tilde{\calP}_{m+1,n}(\vec{x}),\tilde{\calP}_{m+1,n}(\vec{x'}))&=\TVD((\calP_{m,n}(\vec{x}-\vec{\sfU}),\vec{\sfU}),(\calP_{m,n}(\vec{x'}-\vec{\sfU'}),\vec{\sfU'}))\\
&=\mathbb{E}_{\vec{\sfU},\vec{\sfU'}}[\TVD(\calP_{m,n}(\vec{x}-\vec{\sfU}),\calP_{m,n}(\vec{x'}-\vec{\sfU'}))]]\\
&=\mathbb{E}_{\vec{\sfX},\vec{\sfX}'}[\TVD(\calP_{m,n}(\vec{\sfX}),\calP_{m,n}(\vec{\sfX}'))],
\end{align*}
where $\vec{\sfX},\vec{\sfX}'$ are chosen uniformly at random conditioned on $\vec{\sfX}=\vec{\sfX}'+\vec{x}-\vec{x'}$. 
\end{proof}

We now upper bound the expected total variation distance between the two independent executions of the $\gamma$-imperfect shuffle, using an approach similar to Lemma C.1 in \cite{BalleBGN20}. 
\begin{lemma}
\lemlab{lem:avg:ub}
Let $\vec{\sfX}$ and $\vec{\sfX'}$ be drawn uniformly at random from all pairs of vectors in $\mathbb{G}^n$ with the same sum, noting that $\vec{\sfX}$ and $\vec{\sfX'}$ are not independent.  
For two independent executions $\calP_{m,n}$ and $\calP'_{m,n}$ of the $\gamma$-imperfect shuffle, 
\[\mathbb{E}_{\vec{\sfX},\vec{\sfX}'}[\TVD_{\mid\vec{\sfX},\vec{\sfX}'}(\calP_{m,n}(\vec{\sfX}),\calP_{m,n}(\vec{\sfX}'))]\le\sqrt{q^{mn-1}\PPr{\calP_{m,n}(\vec{\sfX})=\calP'_{m,n}(\vec{\sfX})}-1}.\]
\end{lemma}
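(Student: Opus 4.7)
The plan is to follow the standard TVD-to-collision-probability reduction via Cauchy--Schwarz, with one observation tailored to the imperfect shuffle: since the split step already randomizes each player's contribution into a uniform vector, the output of $\calP_{m,n}$ is marginally uniform regardless of how symmetric the shuffler actually is, which is what allows the cross-collision term to be computed exactly even in the $\gamma$-I-shuffle model.

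Write $P := P_{\vec{\sfX}}$ and $Q := P_{\vec{\sfX}'}$ for the conditional output distributions of $\calP_{m,n}(\vec{\sfX})$ and $\calP_{m,n}(\vec{\sfX}')$ given the inputs. After conditioning on the common sum $s := \sum_i \sfX_i = \sum_i \sfX'_i$, both $P$ and $Q$ are supported on the slice $\{z \in \mathbb{G}^{mn} : \sum_k z_k = s\}$, which has exactly $q^{mn-1}$ elements. Cauchy--Schwarz then yields
\[2\TVD(P,Q) \;=\; \|P-Q\|_1 \;\le\; \sqrt{q^{mn-1}}\cdot\|P-Q\|_2,\]
and after taking expectations over $\vec{\sfX},\vec{\sfX}'$ and applying Jensen's inequality, the goal reduces to bounding $\mathbb{E}_{\vec{\sfX},\vec{\sfX}'}\|P-Q\|_2^2$.

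Expanding the squared $L_2$ distance as $\sum_z P(z)^2 + \sum_z Q(z)^2 - 2\sum_z P(z)Q(z)$ and taking expectations, the two diagonal terms each equal $\PPr{\calP_{m,n}(\vec{\sfX}) = \calP'_{m,n}(\vec{\sfX})}$ (using the symmetry of the joint law of $\vec{\sfX},\vec{\sfX}'$), while the cross-term equals $\PPr{\calP_{m,n}(\vec{\sfX}) = \calP'_{m,n}(\vec{\sfX}')}$ for two independent runs on the two inputs. To evaluate this last quantity, I would use the observation that when $\vec{\sfX}$ is marginally uniform on $\mathbb{G}^n$, each player's share vector $(x_{i,1},\ldots,x_{i,m})$ is uniform on $\mathbb{G}^m$, so the pre-shuffle message list is uniform on $\mathbb{G}^{mn}$; since a uniform distribution is invariant under any (possibly random) permutation---perfect, $\gamma$-imperfect, or otherwise---the output of $\calP_{m,n}(\vec{\sfX})$ is uniform on $\mathbb{G}^{mn}$ marginally. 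Conditioned on the common sum $s$, both outputs are then independent and uniform on the $q^{mn-1}$-element slice, giving cross-collision probability exactly $1/q^{mn-1}$.

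Combining yields $\mathbb{E}\|P-Q\|_2^2 = 2\PPr{\calP_{m,n}(\vec{\sfX}) = \calP'_{m,n}(\vec{\sfX})} - 2/q^{mn-1}$, and substituting into the Cauchy--Schwarz bound produces $\mathbb{E}[2\TVD(P,Q)] \le \sqrt{2q^{mn-1}\PPr{\calP_{m,n}(\vec{\sfX}) = \calP'_{m,n}(\vec{\sfX})} - 2}$, which is a factor of $\sqrt{2}$ tighter than the statement and hence implies it. The only conceptual subtlety is the invariance step: what could have been a major obstacle in the imperfect model---that collision probabilities under the shuffler depend on its unknown per-permutation weights---collapses once one notices that the pre-shuffle distribution is already uniform, so the shuffler's $\gamma$-imperfection literally never appears in the computation.
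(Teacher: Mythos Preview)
Your proof is correct. Both you and the paper use the standard Cauchy--Schwarz reduction from TVD to collision probability, but the execution differs in one place. The paper first applies the triangle inequality through the uniform distribution $\vec{\sfV}$ on the slice, reducing to $2\,\mathbb{E}_{\vec{\sfX}}[\TVD(\calP(\vec{\sfX}),\vec{\sfV})]$, and then bounds this by the variance of $\calZ = \PPr{\calP(\vec{\sfX})=\vec{\sfV}}$; since $\mathbb{E}_{\vec{\sfV}}[\calZ]=q^{1-mn}$ holds automatically, this route never needs to know the marginal law of the protocol's output. You instead expand $\|P-Q\|_2^2$ directly, which produces a cross-collision term $\PPr{\calP(\vec{\sfX})=\calP'(\vec{\sfX}')}$, and you compute it exactly via the observation---not invoked in the paper's proof---that the split step already makes the marginal output uniform on the slice regardless of the shuffler's $\gamma$-imperfection. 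Your route saves the triangle-inequality step and the factor of $\sqrt{2}$ you noted; the paper's route through the uniform intermediary is slightly more robust in that it does not rely on that marginal-uniformity observation at all.
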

\begin{proof}
We write $\calP$ and $\calP'$ as shorthand for $\calP_{m,n}$ and $\calP'_{m,n}$, respectively. 
Let $\vec{\sfV}$ be a uniformly random vector drawn from $\mathbb{G}^{mn}$, conditioned on $\vec{\sfV}$ having the same sum as $\vec{\sfX}$ and $\vec{\sfX'}$. 
Then by the triangle inequality, 
\begin{align*}
\mathbb{E}_{\vec{\sfX},\vec{\sfX}'}[\TVD_{\mid\vec{\sfX},\vec{\sfX}'}(\calP(\vec{\sfX}),\calP(\vec{\sfX}'))]&\le\mathbb{E}_{\vec{\sfX},\vec{\sfX}'}[\TVD_{\mid\vec{\sfX},\vec{\sfX}'}(\calP(\vec{\sfX}),\vec{\sfV})+\TVD_{\mid\vec{\sfX},\vec{\sfX}'}(\vec{\sfV},\calP(\vec{\sfX}'))]\\
&=\mathbb{E}_{\vec{\sfX}}[\TVD_{\mid\vec{\sfX}}(\calP(\vec{\sfX}),\vec{\sfV})]+\mathbb{E}_{\vec{\sfX}'}[\TVD_{\mid\vec{\sfX}'}(\vec{\sfV},\calP(\vec{\sfX}'))]\\
&=2\mathbb{E}_{\vec{\sfX}}[\TVD_{\mid\vec{\sfX}}(\calP(\vec{\sfX}),\vec{\sfV})].
\end{align*}
Moreover, considering the distribution over $\vec{\sfV}$,
\begin{align*}
2\TVD_{\mid\vec{\sfX}}(\calP(\vec{\sfX}),\vec{\sfV})&=\sum_{\vec{v}\in\mathbb{G}^{mn}}\left\lvert\PPr{\calP(\vec{\sfX})=\vec{v}}-\PPr{\vec{\sfV}=\vec{v}}\right\rvert\\
&=\sum_{\vec{v}\in\mathbb{G}^{mn}, \sum\vec{v}=\sum\vec{\sfX}}|\PPr{\calP(\vec{\sfX})=\vec{v}}-q^{1-mn}|\\
&=q^{mn-1}\EEx{\vec{\sfV}}{\left\lvert\PPr{\calP(\vec{\sfX})=\vec{\sfV}}-q^{1-mn}\right\rvert}.
\end{align*}
Since $\vec{\sfV}$ is a uniformly random vector from $\mathbb{G}^{mn}$ with its sum being equal to that of $\vec{\sfX}$, then for the random variable $\calZ:=\calZ(\sfX,\sfV):=\PPr{\calP(\vec{\sfX})=\vec{\sfV}}$, we have $\Ex{\calZ}=q^{1-mn}$. 
Therefore, 
\[2\TVD_{\mid\vec{\sfX}}(\calP(\vec{\sfX}),\vec{\sfV})\le q^{mn-1}\mathbb{E}[|\calZ-\mathbb{E}[\calZ]|].\]
By convexity,
\[\mathbb{E}[|\calZ-\mathbb{E}[\calZ]|]\le\sqrt{\mathbb{E}[\calZ^2]}.\]
Since we have
\begin{align*}
\mathbb{E}_{\vec{\calV}}[\calZ^2]&=q^{1-mn}\sum_{\vec{v}\in\mathbb{G}^{mn}, \sum\vec{v}=\sum\vec{\sfX}}\PPr{\calP(\vec{\sfX})=\vec{v}}^2\\
&=q^{1-mn}\PPr{\calP(\vec{\sfX})=\calP'(\vec{\sfX})},
\end{align*}
we thus have
\begin{align*}
\mathbb{E}_{\vec{\sfX},\vec{\sfX}'}[\TVD_{\mid\vec{\sfX},\vec{\sfX}'}(\calP(\vec{\sfX}),\calP(\vec{\sfX}')]&\le2\TVD_{\mid\vec{\sfX}}(\calP(\vec{\sfX}),\vec{\sfV})\\
&\le q^{mn-1}\mathbb{E}_{\calV(\vec{\sfX}')}[|\PPr{\calP(\vec{\sfX})=\vec{\sfV}}-q^{1-mn}|]\\
&\le\sqrt{q^{mn-1}\PPr{\calP_{m,n}(\vec{\sfX})=\calP'_{m,n}(\vec{\sfX})}-1}. & &\qedhere
\end{align*}
\end{proof}
We note that the probability that two independent executions of the protocol can be decomposed into the split protocol and the mix protocol as follows. 
By comparison, Lemma C.2 in \cite{BalleBGN20} was able to prove a simpler relationship by leveraging properties of their symmetric shuffler, which we do not have for an imperfect shuffler. 
\begin{lemma}
\lemlab{lem:avg:two:shuf}
Let $\calR_{m,n}$ and $\calR'_{m,n}$ denote two independent executions of the split protocol in $\calP_{m,n}$ so that $\calP_{m,n}=\calS_{m,n}\circ\calR_{m,n}$. 
Then
\[\PPr{\calP_{m,n}(\vec{\sfX})=\calP'_{m,n}(\vec{\sfX})}=\PPr{\calR_{m,n}(\vec{\sfX})=\calS^{-1}_{m,n}\circ\calS'_{m,n}\circ\calR'_{m,n}(\vec{\sfX})}.\]
\end{lemma}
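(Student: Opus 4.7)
The plan is to view the equality event pointwise and exploit the fact that every realization of the imperfect shuffler is a (deterministic) permutation, hence invertible. Writing $\calP_{m,n}=\calS_{m,n}\circ\calR_{m,n}$ and $\calP'_{m,n}=\calS'_{m,n}\circ\calR'_{m,n}$ where $\calR_{m,n},\calR'_{m,n},\calS_{m,n},\calS'_{m,n}$ are mutually independent (two independent split randomizations and two independent imperfect shuffles), the event $\calP_{m,n}(\vec{\sfX})=\calP'_{m,n}(\vec{\sfX})$ is just the event
\[\calS_{m,n}\circ\calR_{m,n}(\vec{\sfX})=\calS'_{m,n}\circ\calR'_{m,n}(\vec{\sfX}).\]

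The second step is to observe that every sample of $\calS_{m,n}$ is a permutation of $mn$ coordinates (a product of $m$ permutations, one per round) and therefore a bijection on $\mathbb{G}^{mn}$; in particular it admits a pointwise inverse $\calS_{m,n}^{-1}$ that is a measurable function of the randomness of $\calS_{m,n}$. Applying $\calS_{m,n}^{-1}$ to both sides yields the equivalent event
\[\calR_{m,n}(\vec{\sfX})=\calS_{m,n}^{-1}\circ\calS'_{m,n}\circ\calR'_{m,n}(\vec{\sfX}).\]
Because the rewriting is deterministic on every sample of $(\calR_{m,n},\calR'_{m,n},\calS_{m,n},\calS'_{m,n})$, the two events are literally the same event on the underlying probability space, hence have identical probability.

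I expect no real obstacle here; the statement is a tautological rearrangement once one notes that a shuffle is a bijection. The only point that requires a touch of care is making sure that $\calS_{m,n}^{-1}$ is interpreted as the inverse of the \emph{sample} of $\calS_{m,n}$ (i.e., as a random variable defined jointly with $\calS_{m,n}$), rather than as an independent ``inverse imperfect shuffle''; with that convention the two sides of the claimed equality refer to the same event under the same product measure, so the proof is essentially one line once the notation is unpacked.
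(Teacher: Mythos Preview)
Your proposal is correct and essentially identical to the paper's own proof, which also simply unfolds $\calP_{m,n}=\calS_{m,n}\circ\calR_{m,n}$ and applies the inverse of the (bijective) shuffle sample to both sides. Your explicit remark that $\calS_{m,n}^{-1}$ denotes the inverse of the realized permutation (not an independent shuffle) is the only subtlety, and you handle it correctly.
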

\begin{proof}
Note that 
\begin{align*}
\PPr{\calP_{m,n}(\vec{\sfX})=\calP'_{m,n}(\vec{\sfX})}&=\PPr{\calS_{m,n}\circ\calR_{m,n}(\vec{\sfX})=\calS'_{m,n}\circ\calR'_{m,n}(\vec{\sfX})}\\
&=\PPr{\calR_{m,n}(\vec{\sfX})=\calS^{-1}_{m,n}\circ\calS'_{m,n}\circ\calR'_{m,n}(\vec{\sfX})}.
\end{align*}
\end{proof}

From \lemref{lem:avg:ub} and \lemref{lem:avg:two:shuf}, we have
\begin{lemma}
\lemlab{lem:exp:tvd}
For two independent executions $\calP_{m,n}$ and $\calP'_{m,n}$ of the split-and-mix protocol with a $\gamma$-imperfect shuffler,
\[\mathbb{E}_{\vec{\sfX},\vec{\sfX}'}[\TVD(\calP_{m,n}(\vec{\sfX}),\calP_{m,n}(\vec{\sfX}'))]\le\sqrt{q^{mn-1}\PPr{\calR_{m,n}(\vec{\sfX})=\calS^{-1}_{m,n}\circ\calS'_{m,n}\circ\calR'_{m,n}(\vec{\sfX})}-1}.\]
\end{lemma}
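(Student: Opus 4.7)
The plan is essentially to combine the two preceding lemmas directly; no new machinery is needed beyond noticing that the quantities line up. Specifically, \lemref{lem:avg:ub} already establishes the desired square-root bound, but expressed in terms of the collision probability $\PPr{\calP_{m,n}(\vec{\sfX})=\calP'_{m,n}(\vec{\sfX})}$ of two independent full protocol executions on the same input. The only remaining task is to rewrite that collision probability in the form that appears on the right-hand side of \lemref{lem:exp:tvd}.

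First, I would invoke \lemref{lem:avg:ub} to get
\[\mathbb{E}_{\vec{\sfX},\vec{\sfX}'}[\TVD_{\mid\vec{\sfX},\vec{\sfX}'}(\calP_{m,n}(\vec{\sfX}),\calP_{m,n}(\vec{\sfX}'))]\le\sqrt{q^{mn-1}\PPr{\calP_{m,n}(\vec{\sfX})=\calP'_{m,n}(\vec{\sfX})}-1}.\]
Next, I would apply the identity of \lemref{lem:avg:two:shuf}, which factors each execution as $\calP_{m,n}=\calS_{m,n}\circ\calR_{m,n}$ and uses invertibility of the permutation $\calS_{m,n}$ to rewrite the collision event as $\calR_{m,n}(\vec{\sfX})=\calS^{-1}_{m,n}\circ\calS'_{m,n}\circ\calR'_{m,n}(\vec{\sfX})$. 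Substituting this equality into the bound above immediately yields the stated inequality.

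There is no genuine obstacle here: the work has already been done in the two previous lemmas, and the outer expectation $\mathbb{E}_{\vec{\sfX},\vec{\sfX}'}[\TVD(\cdot,\cdot)]$ in the statement is the same as the expectation of the conditional TVD in \lemref{lem:avg:ub} (by the tower property, since $\TVD(\calP(\vec{\sfX}),\calP(\vec{\sfX}'))$ for fixed inputs is exactly $\TVD_{\mid\vec{\sfX},\vec{\sfX}'}$). The only minor sanity check to perform is that the protocols $\calP_{m,n}$ and $\calP'_{m,n}$ being independent executions of split-and-mix justifies the factorization of \lemref{lem:avg:two:shuf}, where both the split step $\calR$ and the shuffle step $\calS$ are drawn independently in the two runs. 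This completes the proof in essentially one line of substitution.
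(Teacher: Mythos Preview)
Your proposal is correct and matches the paper's approach exactly: the paper simply states that \lemref{lem:exp:tvd} follows from \lemref{lem:avg:ub} and \lemref{lem:avg:two:shuf}, which is precisely the one-line substitution you describe.
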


\subsection{Reduction to Connected Components}
\seclab{sec:reduce:cc}
In this section, we prove the following general statement upper bounding the probability that the shuffler $\calS^{-1}_{m,n}\circ\calS'_{m,n}(\cdot)$ on the output of a randomizer achieves the same output as an independent instance of the randomizer by the expectation of a quantity relating to the number of connected components in the communication graph of the shuffler $\calS^{-1}_{m,n}\circ\calS'_{m,n}(\cdot)$. 
Specifically, we can view a protocol $\calP_{m,n}$ that is an ordered pair $\pi_1,\ldots,\pi_m$, where $\pi_j$ is a permutation on $[n]$ for each $j\in[m]$, so that in each round $j\in[m]$, user $i\in[n]$ sends a message to user $\pi_j(i)$. 

Then we can define the \emph{communication graph} for the multi-message shuffle protocol $\calP_{m,n}$ as follows. 
The graph $G$ consists of $n$ vertices, which we associate with $[n]$, corresponding to the players $[n]$ participating in the protocol $\calP_{m,n}$. 
We add an edge between vertices $i$ and $j$ if player $i$ passes one of their $m$ messages to player $j$. 

The following proof is the same as Lemma C.4 in \cite{BalleBGN20}. 
\begin{lemma}
\lemlab{lem:prob:shuf:exp:graph}
Let $G$ be the graph on $n$ vertices formed the communication graph of the shuffle $\calS^{-1}\circ\calS'$. 
Let $C(G)$ be the number of connected components of $G$. 
Then
\[\PPr{\vec{\calR}(\vec{\sfX})=\calS^{-1}\circ\calS'\circ\vec{\calR}'(\vec{\sfX})}\le\Ex{q^{C(G)-mn}}.\]
\end{lemma}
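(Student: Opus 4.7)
The plan is to condition on the realized permutations $\pi_1,\ldots,\pi_m$ on $[n]$ comprising $\calS^{-1}\circ\calS'$ and, crucially, to marginalize over the uniformly random input $\vec{\sfX}$ before counting. Write $\sigma_\pi$ for the induced action on $\mathbb{G}^{nm}$. The key re-parametrization is that when $\vec{\sfX}$ is uniform in $\mathbb{G}^n$ and $\calR(\vec{\sfX})$ is uniform on the affine subspace $\{\vec{z}\in\mathbb{G}^{nm}:\sum_j z_{i,j}=\sfX_i\,\forall i\}$, the marginal of $\calR(\vec{\sfX})$ is simply uniform on $\mathbb{G}^{nm}$, with $\vec{\sfX}$ recoverable as its vector of row sums; the same holds (independently) for $\calR'(\vec{\sfX})$ once one conditions on $\calR(\vec{\sfX})$.

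Under this marginalization, I first compute the conditional probability for fixed $\pi$. Conditioning on $\calR(\vec{\sfX})=\vec{z}$, the event forces $\calR'(\vec{\sfX})=\sigma_\pi^{-1}\vec{z}$, which must lie in the support of $\calR'(\vec{\sfX})$, i.e., must have row sums equal to those of $\vec{z}$; since the row sums of $\sigma_\pi^{-1}\vec{z}$ are the ``twisted'' row sums $\sum_j z_{\pi_j(i),j}$, the compatibility condition is $w_i(\vec{z}):=\sum_j(z_{i,j}-z_{\pi_j(i),j})=0$ for every $i\in[n]$. Given this, the conditional probability of $\calR'$ taking the required value is $q^{-n(m-1)}$, so
\[P(\calR(\vec{\sfX})=\sigma_\pi\calR'(\vec{\sfX})\mid\pi)=q^{-mn-n(m-1)}\cdot N(\pi),\]
where $N(\pi):=|\{\vec{z}\in\mathbb{G}^{nm}:w_i(\vec{z})=0\,\forall i\}|$.

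The heart of the argument is computing the rank of $\{w_i\}_{i\in[n]}$. Writing a general dependency $\sum_i\alpha_i w_i=0$ and reading off the coefficient of $z_{k,j}$ yields $\alpha_k=\alpha_{\pi_j^{-1}(k)}$ for every $(k,j)$, forcing $\alpha$ to be constant along the orbit structure induced by the $\pi_j$'s on $[n]$; by definition of the communication graph $G$, these orbits are precisely the connected components of $G$, so the dependency space has dimension exactly $C(G)$, the rank is $n-C(G)$, and $N(\pi)=q^{mn-n+C(G)}$. Substituting gives $P(\cdot\mid\pi)=q^{C(G)-mn}$, and averaging over $\pi$ yields the claim. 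The main subtlety is that a more naive rows-and-twisted-rows constraint system $\{r_i=\sfX_i,\,t_i=\sfX_i\}$ can admit more than one linear dependency per component, but those ``extra'' dependencies correspond to non-trivial linear constraints on $\vec{\sfX}$ that disappear once we work with the differences $w_i=r_i-t_i$ directly, i.e., once $\vec{\sfX}$ is marginalized out. This is the step that makes the per-component accounting clean and brings the exponent down to exactly $C(G)$.
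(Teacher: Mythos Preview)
Your proof is correct and takes a genuinely different route from the paper's. The paper also conditions on $\calS,\calS'$, but then computes the conditional probability by a sequential coordinate-by-coordinate argument: it reindexes so that each player's $m$ messages are contiguous, sets $p_j=\PPr{\sfW_j=\sfW'_j\mid\calE_1,\ldots,\calE_{j-1}}$, and argues case-by-case that $p_j=q^{-1}$ unless $j$ is (a multiple of $m$ and) the largest index in its connected component, where $p_j=1$; the last subcase requires constructing an explicit group action on the space of consistent pairs $(\sfW,\sfW')$ using a path in $G$ to a higher-indexed vertex. The product $\prod_j p_j$ then gives $q^{C(G)-mn}$. Your approach instead packages the whole computation as a single linear-algebra count: the compatibility conditions $w_i(\vec{z})=0$ form a linear system whose dependency space you identify with functions on $[n]$ constant along orbits of $\langle\pi_1,\ldots,\pi_m\rangle$, i.e., along components of $G$, yielding rank $n-C(G)$ and hence $N(\pi)=q^{mn-n+C(G)}$. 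This is more structural and arguably cleaner than the paper's hands-on induction; the paper's argument, on the other hand, is more self-contained in that it never needs to reason about ranks over $\mathbb{F}_q$. One small difference in scope: your marginalization step uses that $\vec{\sfX}$ is uniform on $\mathbb{G}^n$, which is true in context (it comes from the average-case reduction), whereas the paper's sequential argument in fact establishes the identity $\PPr{\cdot\mid\calS,\calS'}=q^{C(G)-mn}$ for every fixed $\vec{\sfX}$; this extra generality is not used downstream.
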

\begin{proof}
By the law of total expectation,
\[\PPr{\vec{\calR}(\vec{\sfX})=\calS^{-1}\circ\calS'\circ\vec{\calR}'(\vec{\sfX})}=\Ex{\PPr{\vec{\calR}(\vec{\sfX})=\calS^{-1}\circ\calS'\circ\vec{\calR}'(\vec{\sfX})\mid\calS,\calS'}}.\]
Thus for the graph $G$ conditioned on $\calS$ and $\calS'$, it suffices to show that 
\[\PPr{\vec{\calR}(\vec{\sfX})=\calS^{-1}\circ\calS'\circ\vec{\calR}'(\vec{\sfX})\mid\calS,\calS'}=q^{C(G)-mn}.\]
Note that $C(G)$ depends on the choices of $\calS$ and $\calS'$ but we omit these dependencies in the notation for the sake of presentation. 
Recall that $\calP_{m,n}(\vec{\sfX})=\calS_{m,n}\circ\vec{\calR}_{m,n}(\vec{\sfX})$ is currently indexed so that the first message of each player after the shuffle protocol completes are the first $n$ indices, followed by the second message of each of the $n$ players and so forth. 
We thus define a re-indexing permutation $\psi:[mn]\to[mn]$ to that the $m$ messages of the first player will be the first $m$ indices, followed by the $m$ messages of the second player and so forth. 
That is,
\[\psi(j)=\left\lfloor\frac{j-1}{m}\right\rfloor+n(j-1\bmod{n})+1.\]
Let $\sfW,\sfW'\in\mathbb{G}^{mn}$ be defined so that $\sfW_j=\psi(\vec{\calR}(\vec{\sfX}))_j$ and $\sfW'_j=\psi(\calS^{-1}\circ\calS'\circ\vec{\calR}'(\vec{\sfX}))_j$. 
The task then becomes to show that
\[\PPr{\sfW=\sfW'\mid\calS,\calS'}=q^{C(G)-mn}.\]
Toward that end, for each $j\in[mn]$, we define $\calE_j$ to be the event that $\sfW_j=\sfW'_j$ and $p_j=\PPr{\calE_j\mid\calE_1,\ldots,\calE_{j-1}}$, so that
\[\PPr{\sfW=\sfW'\mid\calS,\calS'}=\prod_{j=1}^mn p_j.\]

Firstly, consider the messages that are not the last message by a particular player, i.e., consider the values of $j\in[mn]$ that are not divisible by $m$. 
Observe that conditioning on fixed values of $\vec{\sfX}$ and $\vec{\calR}'$, as well as the events $\calE_1,\ldots,\calE_{j-1}$, the value of $\sfW_j$ remains uniformly distributed and has probability $q^{-1}$ of being equal to to $\sfW'_{j}$. 
Hence, we have $p_j=q^{-1}$. 

For the cases where $j$ is divisible by $m$, we further consider two subcases. 
In particular, we consider the case where $j$ is the largest index in $C_j$ and the case where $j$ is not the largest index in $C_j$, where $C_j$ is the set of vertices in the same connected component as $j$ in $G$. 

In the first subcase, the multisets of $\sfW'$ and $\vec{\calR}'(\sfX')$ restricted to $C_i$ are the same and thus the multisets of the summands are the same, so that
\[\sum_{i\mid C_i=C_j}\sfW'_i=\sum_{i\mid C_i=C_j}\psi(\vec{\calR}'(\sfX'))_i.\]
Moreover, since the indices corresponding to all messages of a fixed player are in the same connected component, then
\[\sum_{i\mid C_i=C_j}\psi(\vec{\calR}'(\sfX'))_i=\sum_{i\mid C_i=C_j}\sfW_i.\]
Finally, we have that conditioning on $\calE_1,\ldots,\calE_{j-1}$ and the fact that $j$ is the largest index in $C_j$,
\[\sum_{i\mid C_i=C_j,i\neq j}\sfW'_i=\sum_{i\mid C_i=C_j,i\neq j}\sfW_i.\]
Therefore, we have $p_j=1$.

For the second subcase, we shall show that $p_j=q^{-1}$. 
Let $\calT$ be the subset of $(\sfW,\sfW')\in\mathbb{G}^{2mn}$ that are consistent with $\calE_1,\ldots,\calE_{j-1}$ and a fixed value of $\vec{\sfX}$. 
We show there exists a homomorphism $\phi:\mathbb{G}\to\mathbb{G}^{2mn}$ that maps from $g\in\mathbb{G}$ to a $u_g\in\mathbb{G}^{2mn}$ with a specific property to be defined.  
We then consider the action of $\mathbb{G}^{2mn}$ on itself by addition of $u_g$. 
Then the property of $\phi$ that we show is that $u_g$ fixes $\calT$ and $\sfW_j$ but adds $g$ to $\sfW'_j$. 
Consider the partitioning of $\calT$ into equivalence classes where two elements of $\calT$ are equivalent if they are equal under addition by $u_g$ for some $g$. 
Then the homomorphism induces a partitioning of $\calT$ into subsets of size $q$ such that each subset contains exactly one element for which $\calE_j$ holds. 
Since each value of $\calT$ is equally probable, it then follows that $p_j=q^{-1}$ as desired. 

We now define the homomorphism $\phi$ as follows. 
Since there exists a path in $G$ from the vertex with the $j$-th message to a higher index vertex, then there exists some path parameter $\ell$ and a corresponding path $(a_1,b_1,\ldots,a_\ell,b_\ell,a_{\ell+1})$ such that the following hold. 
Firstly, each of the terms $a_i,b_i$ are elements of $[mn]$ that will ultimately map to indices of elements in $\mathbb{G}^{mn}$. 
Secondly, for all $i\in[\ell]$, we have $\pi(b_i)=a_i$ for the permutation $\pi$ induced by the $m$ message $n$ player protocol and moreover, $b_i$ and $a_{i+1}$ correspond to the same vertex. 
Finally, it holds that $a_1=j$, $b_\ell>j$, $a_i\neq a_{i'}$ for any $i\neq i'$, and $b_i<j$ for all $i<\ell$. 
Then we implicitly define the homomorphism $\phi$ by defining $u_g$ to be the element of $\mathbb{G}^{2mn}$ with the value $g$ in the entries $a_2,\ldots,a_{\ell+1},b_1+mn,\ldots,b_\ell+mn$ and the identity $0$ in all other coordinates, where we recall that the elements $a_i$ and $b_i$ correspond to indices of elements in $\mathbb{G}^{mn}$. 

We observe that the group action of addition by $u_g$ does not affect the realization of $\calE_1,\ldots,\calE_{j-1}$ since $\sfW_{a_i}$ and $\sfW'_{a_i}=\vec{\calR}'(\vec{\sfX})_{b_i}$ are increased by exactly the same amount by $u_g$, except for the case when $i=1$ or $i=\ell+1$. 
However, note that $a_i\ge j$ for both of the cases where $i=1$ and $i=\ell+1$, which does not affect the realization of $\calE_1,\ldots,\calE_{j-1}$. 
Hence, $u_g$ has the desired properties and so it follows that $p_j=q^{-1}$. 

Therefore, conditioned on any fixed realization of $\calS$, we have that 
\[\prod_{j=1}^{mn} p_j=q^{C(G)-mn},\]
so that in summary
\[\PPr{\vec{\calR}=\calS^{-1}\circ\calS'\circ\vec{\calR}'(\vec{\sfX})}\le\mathbb{E}[q^{C(G)-mn}].\]
\end{proof}
We remark that the statement of \lemref{lem:prob:shuf:exp:graph} holds even for a general shuffler $\calS$ with the corresponding communication graph, rather than the specific shuffler $\calS^{-1}_{m,n}\circ\calS'_{m,n}(\cdot)$. 

We now show that the composition of two shufflers, where the inner shuffler is a $\gamma$-imperfect shuffler, is also a $\gamma$-imperfect shuffler with the same parameter. 
\begin{lemma}
\lemlab{lem:do:postprocess}
Let $\calS,\calS'$ be two shufflers such that $\calS$ is a $\gamma$-imperfect shuffler. 
Then, $\calS'\circ\calS$ is a $\gamma$-imperfect shuffler.
\end{lemma}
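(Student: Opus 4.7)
The plan is to condition on the permutation $\sigma$ output by $\calS'$ and then apply the $\gamma$-imperfect hypothesis on $\calS$ term by term. Treating $\calS$ and $\calS'$ as independent draws from their respective distributions (as is the case wherever $\calS'\circ\calS$ appears in our applications), the law of total probability gives
\[\PPr{\calS'\circ\calS = \tau} \;=\; \sum_{\sigma\in\Pi} \PPr{\calS' = \sigma}\,\PPr{\calS = \sigma^{-1}\circ\tau},\]
and an analogous identity with $\tau'$ in place of $\tau$. It therefore suffices to relate $\PPr{\calS = \sigma^{-1}\circ\tau}$ to $\PPr{\calS = \sigma^{-1}\circ\tau'}$ by a factor that does not depend on $\sigma$.

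The crux of the argument is the invariance $\Swap(\sigma^{-1}\circ\tau,\sigma^{-1}\circ\tau') = \Swap(\tau,\tau')$. I would establish it by recalling that $\Swap(\pi,\pi')$ equals the minimum number of transpositions whose product is $\pi'\circ\pi^{-1}$, i.e.\ $n$ minus the number of cycles of $\pi'\circ\pi^{-1}$. The relevant composition here is
\[(\sigma^{-1}\circ\tau')\circ(\sigma^{-1}\circ\tau)^{-1} \;=\; \sigma^{-1}\circ(\tau'\circ\tau^{-1})\circ\sigma,\]
which is conjugate to $\tau'\circ\tau^{-1}$ and hence has the same cycle type, so the same number of transpositions is required to express it.

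Combining the two ingredients, the $\gamma$-imperfect hypothesis on $\calS$ gives, for every $\sigma$,
\[\PPr{\calS = \sigma^{-1}\circ\tau} \;\le\; e^{\gamma\cdot\Swap(\tau,\tau')}\,\PPr{\calS = \sigma^{-1}\circ\tau'},\]
and because the multiplicative factor is uniform in $\sigma$, it can be pulled out of the sum to yield $\PPr{\calS'\circ\calS = \tau} \le e^{\gamma\cdot\Swap(\tau,\tau')}\PPr{\calS'\circ\calS = \tau'}$, which is exactly the $\gamma$-imperfect property of the composed shuffler. The only non-bookkeeping step is the conjugation-invariance of swap distance; once that is in hand, the rest is a direct manipulation of the definitions, so I do not expect any genuine obstacle.
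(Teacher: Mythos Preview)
Your proposal is correct and follows essentially the same argument as the paper: condition on the outer shuffler, apply the $\gamma$-imperfect inequality to $\calS$, and use that swap distance is invariant under left multiplication by a fixed permutation. The paper's version is more terse (it leaves the sum over $\sigma$ implicit in the notation and simply asserts $\Swap((\calS')^{-1}\pi,(\calS')^{-1}\pi')=\Swap(\pi,\pi')$ without proof), whereas you spell out the conjugation/cycle-type justification, but the substance is identical.
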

\begin{proof}
Let $\calS'$ be an arbitrary shuffler and $\calS$ be a $\gamma$-imperfect shuffler. Then, for any $\pi, \pi' \in \Pi$,
\begin{align*}
\PPr{\calS' \circ \calS = \pi} &= \PPr{\calS = (\calS')^{-1} \circ \pi}\\
&\leq e^{\gamma \cdot \Swap((\calS')^{-1} \circ \pi, (\calS')^{-1} \circ \pi')}\PPr{\calS = (\calS')^{-1} \circ \pi'} \\
&= e^{\gamma \cdot \Swap(\pi, \pi')}\PPr{\calS' \circ \calS = \pi'}.
\end{align*}
Thus, $\calS'\circ\calS$ is a $\gamma$-imperfect shuffler. 
\end{proof}


We now show a few structural statements that upper bound the probability that there exists no edge from a set $S\subset[n]$ to $[n]\setminus S$ for a communication graph induced by a $\gamma$-imperfect shuffler. 
\begin{lemma}
\lemlab{lem:connect:one:prob:first}
Let $G$ be the communication graph of a $\gamma$-imperfect shuffler (on an $n$-player $m$-message protocol). 
For a fixed set $S$ with size $s$, the probability that there exists no edge from $S$ to $[n]\setminus S$ in $G$ is at most $e^{2sm\gamma}\binom{n}{s}^{-m}$ for $s\le\frac{n}{2}$ and at most $e^{2(n-s)m\gamma}\binom{n}{s}^{-m}$ for $s\ge\frac{n}{2}$.
\end{lemma}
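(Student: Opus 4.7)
The plan is to reduce the $m$-round event to a single-round event and then to bound the single-round probability by a coset/bijection argument. Since the protocol draws the $m$ permutations $\pi_1,\ldots,\pi_m$ independently from a $\gamma$-imperfect shuffler, and since ``no edge from $S$ to $[n]\setminus S$ in round $j$'' is precisely the event $\{\pi_j(S)=S\}$, independence gives
\[\PPr{\text{no edge in }G}=\prod_{j=1}^m\PPr{\pi_j(S)=S}.\]
Hence it suffices to prove the single-round bound $p_S:=\PPr{\pi(S)=S}\le e^{2s\gamma}/\binom{n}{s}$ when $s\le n/2$ (and the symmetric bound when $s\ge n/2$) and raise the result to the $m$-th power.

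For the single-round bound I would partition $\Pi$ into left cosets of the stabilizer subgroup $A=\{\sigma\in\Pi:\sigma(S)=S\}$, each coset $\tau A$ identified by its image $S':=\tau(S)\in\binom{[n]}{s}$. For any such $\tau$, let $d:=|S\triangle S'|/2$, $T_1:=S'\setminus S$, and $T_2:=S\setminus S'$ (so $|T_1|=|T_2|=d$). I claim there is an explicit bijection $\phi_\tau:A\to \tau A$ with $\Swap(\sigma,\phi_\tau(\sigma))=d$ for every $\sigma\in A$. To build it, fix once and for all an arbitrary pairing $(t_1^{(i)},t_2^{(i)})_{i\in[d]}$ of $T_1$ with $T_2$; given $\sigma\in A$, define $\phi_\tau(\sigma)$ by left-composing $\sigma$ with the product of the $d$ disjoint value-transpositions $(t_1^{(i)}\ t_2^{(i)})$. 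These commute by disjointness, and the product $\phi_\tau(\sigma)\sigma^{-1}$ is a product of $d$ disjoint $2$-cycles whose cycle count is $n-d$, which forces the swap distance to be exactly $d$. A direct computation shows $\phi_\tau(\sigma)(S)=(S\setminus T_2)\cup T_1 = S'$, so $\phi_\tau(\sigma)\in\tau A$, and the map is invertible via the same pairing with the roles of $T_1,T_2$ interchanged.

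Applying the defining $\gamma$-imperfect inequality through the bijection yields $\PPr{\calS=\sigma}\le e^{\gamma d}\PPr{\calS=\phi_\tau(\sigma)}$ for every $\sigma\in A$; summing and using that $\phi_\tau$ is a bijection $A\to \tau A$ gives $\PPr{\calS\in A}\le e^{\gamma d}\PPr{\calS\in \tau A}$, i.e., $\PPr{\calS\in\tau A}\ge e^{-\gamma d}\,p_S$. Summing this over all cosets, which are in bijection with the $\binom{n}{s}$ subsets $S'\in\binom{[n]}{s}$, and grouping by the distance $d$,
\[1=\sum_{S'\in\binom{[n]}{s}}\PPr{\calS\in\tau_{S'}A}\ge p_S\cdot\sum_{d=0}^{\min(s,n-s)}\binom{s}{d}\binom{n-s}{d}\,e^{-\gamma d},\]
where I used that the number of $S'$ at symmetric difference exactly $2d$ from $S$ is $\binom{s}{d}\binom{n-s}{d}$. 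For $s\le n/2$ every $d$ in the range satisfies $d\le s$, so bounding $e^{-\gamma d}\ge e^{-\gamma s}$ termwise and invoking the Vandermonde identity $\sum_d\binom{s}{d}\binom{n-s}{d}=\binom{n}{s}$ produces $p_S\le e^{\gamma s}/\binom{n}{s}$, which upon raising to the $m$-th power is even tighter than the claimed $e^{2sm\gamma}\binom{n}{s}^{-m}$. The case $s\ge n/2$ then follows from the observation $\pi(S)=S\Leftrightarrow \pi([n]\setminus S)=[n]\setminus S$ together with $\binom{n}{s}=\binom{n}{n-s}$.

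The main obstacle is the explicit bijection construction with swap-distance control: I must simultaneously verify (i) that the $d$ chosen value-transpositions commute and act disjointly so that $\phi_\tau(\sigma)\sigma^{-1}$ is a product of $d$ disjoint $2$-cycles and hence $\Swap(\sigma,\phi_\tau(\sigma))=d$, (ii) that $\phi_\tau(\sigma)$ indeed lands in the correct coset $\tau A$, and (iii) that $\phi_\tau$ is a bijection (handled by the symmetric inverse construction). Once these are in place, the remaining ingredients---independence of rounds, a termwise lower bound on the exponential sum, and Vandermonde---are routine.
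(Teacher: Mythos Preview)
Your proposal is correct and actually yields a sharper single-round bound, $p_S\le e^{\gamma s}/\binom{n}{s}$, than the paper's $e^{2s\gamma}/\binom{n}{s}$. The skeleton is the same---reduce to one round by independence, then compare the stabilizer of $S$ to the rest of $\Pi$ via the $\gamma$-imperfect inequality---but the combinatorial packaging differs. The paper fixes a $\pi\in\Pi_S$ and builds, for each such $\pi$, a disjoint ``cloud'' $C_\pi$ of $\binom{n}{s}$ permutations via an ad hoc position-relabeling map $T_A$ (indexed by $s$-subsets $A\subseteq[n]$), and then uses only the coarse uniform bound $\Swap(\pi,\pi')\le 2s$ over the cloud. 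You instead use the left-coset decomposition of $\Pi$ modulo the stabilizer and, for the coset indexed by $S'$, a bijection given by left multiplication by a product of $d=|S\triangle S'|/2$ disjoint transpositions, which has \emph{exact} swap distance $d$; tracking $d$ coset-by-coset and then collapsing the sum with the Vandermonde identity $\sum_d\binom{s}{d}\binom{n-s}{d}=\binom{n}{s}$ is what buys the factor-of-two improvement in the exponent. The obstacles (i)--(iii) you flag are in fact routine: left multiplication by a fixed $\rho$ is automatically a bijection of $\Pi$, it maps $\{\sigma:\sigma(S)=S\}$ onto $\{\sigma:\sigma(S)=\rho(S)=S'\}$, and $\Swap(\sigma,\rho\sigma)=n-c(\sigma^{-1}\rho\sigma)=n-c(\rho)=d$ since conjugation preserves cycle type.
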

\begin{proof}
Without loss of generality, let $S=[s]$, i.e., $S$ is the first $s$ integers of $[n]$. 
Then for a permutation to not induce an edge between $S$ and $[n]\setminus S$, the permutation can be decomposed into a permutation of the first $s$ integers and a permutation of the remaining $n-s$ integers. 
Hence, there are $s!(n-s)!$ permutations of $[n]$ such that $S$ is preserved. 
Let $\Pi_S$ be the set of permutations that preserves $S$ so that $|\Pi_S|=s!(n-s)!$. 

For each permutation $\pi\in\Pi_S$, we define a subset $C_\pi$ of permutations so that (1) $C_{\pi'}\cap C_{\pi}=\emptyset$ for all $\pi,\pi'\in\Pi_S$ with $\pi\neq\pi'$, (2) $\pi$ is the only permutation of $C_\pi$ that preserves $S$, (3) $|C_\pi|=\binom{n}{s}$, and (4) $\pi$ and $\pi'$ have swap distance at most $2s$ for any $\pi'\in C_\pi$, hence implying that $\PPr{\calS=\pi}\le e^{2s\gamma}\cdot\PPr{\calS=\pi'}$. 
Recall that since $\pi\in\Pi_S$, then $\pi$ can be decomposed into permutations $\pi_1$ of the first $s$ integers and permutations $\pi_2$ of the remaining $n-s$ integers. 

Let $A$ be any set of $s$ indices of $[n]$, sorted in increasing order.  
Consider the following transformation $T_A$ on a permutation $\pi$ to produce a permutation $\psi$. 
Place the elements of $\pi$ in positions $[s]$ in order into the $s$ indices of $A$, so that $\pi'(A_i)=\pi(i)$. 
For the supplanted indices that have not been assigned to indices in $A$, place them in order into the remaining positions of $[s]$. 
Formally, let $X=[s]\setminus A$ and $Y=A\setminus[s]$. 
Then we set $\pi'(X_i)=\pi(Y_i)$ for all $i\in[|X|]$, noting that $|X|=|Y|$. 
We then define $C_\pi$ to be the set of permutations that can be obtained from this procedure, i.e., $C_\pi=\{\pi':\exists A\text{ with }\pi=T_A(\pi)\}$. 
See \figref{fig:swaps} for an example of the application of such an example $T_A$. 

\begin{figure*}[!htb]
\centering
\begin{tikzpicture}[scale=0.95]
\draw [decorate, very thick, decoration = {brace}] (0.8,6.5) --  (4.2,6.5);
\node at (2,7){$[s]$};

\node at (1,6){2};
\node at (2,6){4};
\node at (3,6){3};
\node at (4,6){1};
\node at (5,6){7};
\node at (6,6){5};
\node at (7,6){8};
\node at (8,6){6};
\node at (9,6){9};

\draw[thick] (1.7,5.5) -- (2.3,5.5);
\draw[thick] (2.7,5.5) -- (3.3,5.5);
\draw[thick] (5.7,5.5) -- (6.3,5.5);
\draw[thick] (7.7,5.5) -- (8.3,5.5);
\draw[<-] (2,5.3) -- (3.7,5);
\draw[<-] (3,5.3) -- (3.9,5);
\draw[<-] (6,5.3) -- (4.1,5);
\draw[<-] (8,5.3) -- (4.3,5);
\node at (4,4.7){$A$};

\node at (-2,4){(2, 4, 3, 1)};

\node at (1,4){?};
\node at (2,4){?};
\node at (3,4){?};
\node at (4,4){?};
\node at (5,4){7};
\node at (6,4){5};
\node at (7,4){8};
\node at (8,4){6};
\node at (9,4){9};

\draw[thick] (1.7,3.5) -- (2.3,3.5);
\draw[thick] (2.7,3.5) -- (3.3,3.5);
\draw[thick] (5.7,3.5) -- (6.3,3.5);
\draw[thick] (7.7,3.5) -- (8.3,3.5);

\node at (-2,3){(5, 6)};

\node at (1,3){?};
\node at (2,3){2};
\node at (3,3){4};
\node at (4,3){?};
\node at (5,3){7};
\node at (6,3){3};
\node at (7,3){8};
\node at (8,3){1};
\node at (9,3){9};

\draw[thick] (1.7,2.5) -- (2.3,2.5);
\draw[thick] (2.7,2.5) -- (3.3,2.5);
\draw[thick] (5.7,2.5) -- (6.3,2.5);
\draw[thick] (7.7,2.5) -- (8.3,2.5);

\node at (1,2){5};
\node at (2,2){2};
\node at (3,2){4};
\node at (4,2){6};
\node at (5,2){7};
\node at (6,2){3};
\node at (7,2){8};
\node at (8,2){1};
\node at (9,2){9};

\draw[thick] (1.7,1.5) -- (2.3,1.5);
\draw[thick] (2.7,1.5) -- (3.3,1.5);
\draw[thick] (5.7,1.5) -- (6.3,1.5);
\draw[thick] (7.7,1.5) -- (8.3,1.5);
\end{tikzpicture}
\caption{An example of the transformation $T_A$ for the permutation $\pi=(8,4,6,2,1,3,7,5,9)$, with $n=9$, $s=4$, and $A=(2,3,6,8)$. 
Note that the order $(8,4,6,2)$ is preserved within the indices of $A$ in the resulting permutation $\pi'=T_A(\pi)$ and the order $(3,5)$ is preserved within the indices $[s]-A$.  
}
\figlab{fig:swaps}
\end{figure*}
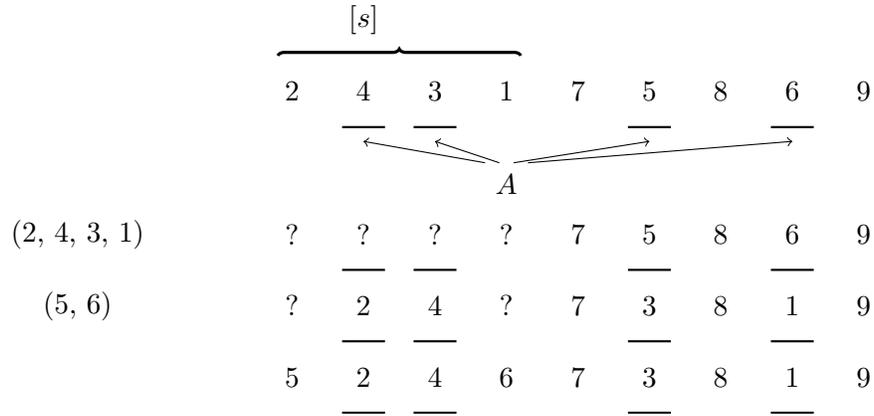

We first claim that $C_{\pi'}\cap C_{\pi}=\emptyset$ for all $\pi,\pi'\in\Pi_S$ with $\pi\neq\pi'$. 
Suppose by way of contradiction, there exists $\psi\in C_{\pi}\cap C_{\pi'}$, so that there exist sets $A$ and $A'$ with $\psi=T_A(\pi)=T_{A'}(\pi')$. 
Recall that since $\pi,\pi'\in\Pi_S$, then $\pi,\pi'$ can be decomposed into permutations $\pi_1,\pi'_1$ of the first $s$ integers and permutations $\pi_2,\pi'_2$ of the remaining $n-s$ integers. 
After applying $T_A$ to $\pi$, then the first $s$ integers are in the indices of $A'$, in some order. 
Similarly, after applying $T_{A'}$ to $\pi'$, then the first $s$ integers are in the indices of $A$, in some order. 
Hence for $\psi=T_A(\pi)=T_{A'}(\pi')$, it follows that $A=A'$, so it suffices to show that $T_A$ is injective for a fixed $A$. 

To that end, note that $T_A$ preserves the order of $[s]$ within $A$ and thus for $\pi=\pi_1\circ\pi_2$, then $\pi_1$ is the restriction of $T_A(\pi)$ to $A$. 
Similarly, note that $T_A$ does not touch the indices outside of $A\cup[s]$ and so $\pi_2$ is preserved by $T_A(\pi)$ in the restriction of $[n]\setminus(A\cup [s])$. 
Finally, $T_A$ preserves the relative order of $\pi_2$ inside the indices of $[s]\setminus A$. 
Therefore, given $A$ and $T_A(\pi)$, we can completely recover $\pi_1$ and $\pi_2$ and thus $\pi$. 
In other words, $T_A$ is injective, so that $T_A(\pi)=T_A(\pi')$ implies $\pi=\pi'$, which is a contradiction. 
\pasin{I don't understand the previous sentence... $T_{A}$ can be well-defined but many-to-one, right?}
\samson{Yes sorry, added sentences that $T_A$ is injective}
Hence $C_{\pi'}\cap C_{\pi}=\emptyset$. 

To see that $\pi$ is the only permutation of $C_\pi$ that preserves $S$, note that if any of the $s$ positions are picked outside $[s]$, then the resulting permutation places a value of $[s]$ outside of the first $s$ positions and so the resulting permutation does not preserve $S$, i.e., the values of $[s]$ are not retained within the first $s$ positions. 
However, there is only a single way to pick $s$ indices from $[n]$ that are all inside $[s]$, which corresponds to $\pi$. 
Hence, $\pi$ is the only permutation of $C_\pi$ that preserves $S$. 

To see the third property, note that $A$ is formed by choosing $s$ indices of $[n]$. 
Hence, $|A|=\binom{n}{s}$. 
Since $A$ is exactly the set of positions for which $\pi_1$ is mapped to, then each element of $A$ corresponds to a unique element in $C_\pi$. 
Thus, $|C_\pi|=\binom{n}{s}$. 

To see the fourth property, note that the only swaps are indices in $A$ with indices in $[s]$, meaning that at most $2s$ indices are changed. 
Thus we have $\pi$ and $\pi'$ have swap distance at most $2s$ for any $\pi'\in C_\pi$. 
Then by the $\gamma$-imperfect shuffle property, $\PPr{\calS=\pi}\le e^{2s\gamma}\cdot\PPr{\calS=\pi'}$. 

Since we have associated each $\pi\in\Pi_S$ with a set $C_\pi$ of size $\binom{n}{s}$ such that $\pi'\not\in C_\pi$ for $\pi'\in \Pi_S$ with $\pi'\neq\pi$ and $\PPr{\calS=\pi}\le e^{s\gamma}\cdot\PPr{\calS=\pi'}$, then it follows from a coupling argument that the probability that there exists no edge from $S$ to $[n]\setminus S$ after one iteration of the $\gamma$-imperfect shuffle is at most $e^{2s\gamma}\binom{n}{s}^{-1}$. 
By independence, the probability that there exists no edge from $S$ to $[n]\setminus S$ in $G$ after the $m$ iterations is at most $e^{2sm\gamma}\binom{n}{s}^{-m}$. 

By symmetry for sets $S$ with size $s$ and $n-s$, we have the probability is at most 
\[\min\left(e^{2sm\gamma}\binom{n}{s}^{-m},e^{2(n-s)m\gamma}\binom{n}{s}^{-m}\right)\]
across all ranges of $s$. 
\end{proof}

\begin{lemma}
\lemlab{lem:connect:one:prob:second}
Let $G$ be the communication graph of a $\gamma$-imperfect shuffler (on an $n$-player $m$-message protocol). 
For a fixed set $S$ with size $s$, the probability that there exists no edge from $S$ to $[n]\setminus S$ in $G$ is at most $e^{km\gamma}\binom{n/2}{k}^{-m}$ for any integer $k$ with $0\le k\le\min(s,n-s)$.
\end{lemma}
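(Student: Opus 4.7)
The plan is to refine the coupling argument of \lemref{lem:connect:one:prob:first}. That lemma associates to each $\pi\in\Pi_S$ (the set of permutations preserving $S$) a family $C_\pi$ of size $\binom{n}{s}$, each element of which is at swap distance at most $2s$ from $\pi$. I will instead trade family size for swap distance, constructing a family of size $\binom{n-s}{k}$ whose members all lie at swap distance at most $k$ from $\pi$. Since the event ``no edge between $S$ and $[n]\setminus S$'' is symmetric under complementation, I may assume without loss of generality that $s\le n/2$, so that $n-s\ge n/2$ and $\binom{n-s}{k}\ge\binom{n/2}{k}$ for all $k\le\min(s,n-s)$.

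Fix a subset $T\subseteq S$ with $|T|=k$, and list $T=\{t_1<\cdots<t_k\}$. For each $\pi\in\Pi_S$ and each $k$-subset $A=\{a_1<\cdots<a_k\}\subseteq[n]\setminus S$, define $\pi_A$ to be the permutation obtained from $\pi$ by swapping $\pi(t_i)$ with $\pi(a_i)$ for every $i\in[k]$, and set $C_\pi:=\{\pi_A:A\subseteq[n]\setminus S,\,|A|=k\}$. I would then verify the analogues of the four properties in \lemref{lem:connect:one:prob:first}: (i) $|C_\pi|=\binom{n-s}{k}$; (ii) the families $\{C_\pi\}_{\pi\in\Pi_S}$ are pairwise disjoint; (iii) $\Swap(\pi,\pi_A)\le k$ for every $\pi_A\in C_\pi$; and hence (iv) $\PPr{\calS=\pi}\le e^{k\gamma}\PPr{\calS=\pi'}$ for each $\pi'\in C_\pi$. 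Both (i) and (ii) rest on the single observation that $A$ is recoverable from $\pi_A$ alone (with $T$ fixed in advance) as $A=\{i\in[n]\setminus S:\pi_A(i)\in S\}$, after which $\pi$ is reconstructed by undoing the $k$ swaps; property (iii) is immediate since $\pi_A$ is obtained from $\pi$ by exactly $k$ coordinate swaps, and (iv) then follows directly from \defref{def:imp:shuffle}.

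Combining these, summing over $\pi\in\Pi_S$, and using the disjointness from (ii),
\[
\binom{n/2}{k}\PPr{\calS\in\Pi_S}\le\sum_{\pi\in\Pi_S}|C_\pi|\PPr{\calS=\pi}\le e^{k\gamma}\sum_{\pi\in\Pi_S}\sum_{\pi'\in C_\pi}\PPr{\calS=\pi'}\le e^{k\gamma},
\]
so the probability of no $S$-edge in a single round is at most $e^{k\gamma}\binom{n/2}{k}^{-1}$, and the claimed $m$-round bound follows from independence across rounds. The main technical point I anticipate is choosing the pivot $T$ to sit inside the \emph{smaller} side $S$ while drawing $A$ from the \emph{larger} side $[n]\setminus S$; the opposite choice would yield only $\binom{s}{k}$ admissible $A$'s, which can fall below $\binom{n/2}{k}$. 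Together with the recoverability of $A$ from $\pi_A$, this asymmetric construction is precisely what drives down the swap exponent from $2s$ to $k$ at a controlled cost in family size.
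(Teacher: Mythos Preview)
Your proposal is correct and follows essentially the same coupling strategy as the paper: fix a $k$-element pivot in the smaller side of the cut and swap it into a $k$-subset $A$ of the larger side, yielding families of size $\binom{n-s}{k}\ge\binom{n/2}{k}$ at swap distance exactly $k$. Your reduction to $s\le n/2$ via complementation and the clean recoverability argument $A=\{i\in[n]\setminus S:\pi_A(i)\in S\}$ are in fact tidier than the paper's separate case analysis, but the idea is the same.
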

\begin{proof}
We can similarly show that the probability that there exists no edge from $S$ to $[n]\setminus S$ in $G$ after the $m$ iterations is at most $e^{km\gamma}\binom{n/2}{k}^{-m}$ for any integer $k$ with $0\le k\le\min(s,n-s)$ by the following modifications to the coupling argument. 
We again let $S=[s]$ without loss of generality and let $k \leq \min(s,n-s)$ be a fixed non-negative integer. 

Recall that there are $s!(n-s)!$ permutations of $[n]$ such that $S$ is preserved. 
We define $\Pi_S$ to be the set of permutations that preserves $S$ so that $|\Pi_S|=s!(n-s)!$ and we define a transformation $T_A(\pi)$ for a permutation $\pi\in\Pi_S$ as follows. 

If $s\le\frac{n}{2}$, we let $A$ be a set of $k$ positions in $\{s+1,\ldots,n\}$, sorted in increasing order. 
We then initialize $\psi=\pi$ and iteratively perform the following procedure $k$ times. 
For each $i\in[k]$, we swap the value in the $i$-th index of $\psi$ with the value in the $A_i$-th index of $A$. 
We then output set $T_A(\pi)$ to be the result of $\psi$ after applying these $k$ swaps. 
Note that since $[s]$ and $A$ are disjoint, we can also explicitly define the resulting $\psi=T_A(\pi)$ by
\[\psi(i)=\begin{cases}
&\pi(i),\qquad i\notin(A\cup[k])\\
&\pi(A_i),\qquad i\in[k]\\
&\pi(j),\qquad j=A_i, i\in[k].
\end{cases}.\]

Similarly, if $s\ge\frac{n}{2}$, we let $A$ be a set of $k$ positions in $[n-s]$, sorted in increasing order, and initialize $\psi=\pi$. 
Then for each $i\in[k]$, we swap the value in the $(n-i+1)$-th index of $\psi$ with the value in the $i$-th index of $A$. 
Alternatively, we can also explicitly define the resulting $\psi=T_A(\pi)$ by
\[\psi(i)=\begin{cases}
&\pi(i),\qquad i\notin(A\cup\{n-k+1,\ldots,n\})\\
&\pi(A_i),\qquad i\in\{n-k+1,\ldots,n\}\\
&\pi(j),\qquad j=A_i, i\in[k].
\end{cases}.\]

We again define $C_\pi$ to be the set of permutations that can be obtained from this procedure, i.e., $C_\pi=\{\pi':\exists A\text{ with }\pi=T_A(\pi)\}$. 
By the same argument as in \lemref{lem:connect:one:prob:second}, we have (1) $C_{\pi'}\cap C_{\pi}=\emptyset$ for all $\pi,\pi'\in\Pi_S$ with $\pi\neq\pi'$, (2) $\pi$ is the only permutation of $C_\pi$ that preserves $S$, (3) $|C_\pi|=\binom{n}{k}$. 
By the construction of $T_A$ performing $k$ swaps on $\pi$, we also have that $\pi$ and $\pi'$ have swap distance at most $k$ for any $\pi'\in C_\pi$, so that $\PPr{\calS=\pi}\le e^{k\gamma}\cdot\PPr{\calS=\pi'}$. 

Also by construction, we have $|C_\pi|\ge\binom{n/2}{k}$ and so by adapting the above coupling argument, we have that the probability that there exists no edge from $S$ to $[n]\setminus S$ in $G$ after the $m$ iterations is at most $e^{km\gamma}\binom{n/2}{k}^{-m}$. 
\end{proof}

By \lemref{lem:connect:one:prob:first} and \lemref{lem:connect:one:prob:second}, we have:
\begin{lemma}
\lemlab{lem:connect:one:prob}
Let $G$ be the communication graph of a $\gamma$-imperfect shuffler (on an $n$-player $m$-message protocol). 
For a fixed set $S$ with size $s$, the probability that there exists no edge from $S$ to $[n]\setminus S$ in $G$ is at most $e^{2sm\gamma}\binom{n}{s}^{-m}$ for $s\le\frac{n}{2}$, at most $e^{2(n-s)m\gamma}\binom{n}{s}^{-m}$ for $s\ge\frac{n}{2}$, and at most $e^{km\gamma}\binom{n/2}{k}^{-m}$ for any integer $k$ with $0\le k\le\min(s,n-s)$.
\end{lemma}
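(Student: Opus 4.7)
The statement to be proved is simply the union of the bounds established in \lemref{lem:connect:one:prob:first} and \lemref{lem:connect:one:prob:second}, so the plan is to invoke each of these two lemmas and observe that, since they each upper bound the same probability (namely, the probability that the communication graph of the $\gamma$-imperfect shuffler contains no edge crossing the cut between $S$ and $[n]\setminus S$), each individual bound holds simultaneously.

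Concretely, the first step is to apply \lemref{lem:connect:one:prob:first}, which directly gives the bound $e^{2sm\gamma}\binom{n}{s}^{-m}$ when $s \le n/2$ and the symmetric bound $e^{2(n-s)m\gamma}\binom{n}{s}^{-m}$ when $s \ge n/2$. These correspond to coupling each $S$-preserving permutation with its orbit under swaps of all $s$ (respectively $n-s$) small-side indices with arbitrary targets on the large side. The second step is to apply \lemref{lem:connect:one:prob:second}, which covers the remaining bound $e^{km\gamma}\binom{n/2}{k}^{-m}$ for any $k$ with $0 \le k \le \min(s,n-s)$; this corresponds to coupling with the orbit under only $k$ swaps rather than the full $s$ or $n-s$, trading a weaker combinatorial factor for a smaller multiplicative distortion penalty.

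Since both lemmas apply to the same event, every listed quantity is simultaneously a valid upper bound on that event's probability, which is exactly the claim. There is no additional obstacle: no new argument, no new coupling, and no new combinatorial identity is needed beyond what is already proved. One minor point to mention, for clarity, is that the three regimes are not mutually exclusive (for example, the $k$-parameterized bound with $k=s$ on the small side overlaps with the first bound but uses a different central binomial $\binom{n/2}{k}$), so the user of the lemma can choose whichever of the three is tightest for the application at hand, typically depending on whether $s$ is close to $0$, close to $n/2$, or close to $n$. This flexibility is the entire point of packaging the two previous lemmas together as a single statement.
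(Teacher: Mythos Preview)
Your proposal is correct and matches the paper's own proof: the paper simply states that the lemma follows immediately from \lemref{lem:connect:one:prob:first} and \lemref{lem:connect:one:prob:second}, exactly as you describe. No additional argument is needed.
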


\lemref{lem:do:postprocess} and \lemref{lem:connect:one:prob} are the two main structural properties of imperfect shufflers that we use to overcome the challenge of adapting the analysis of \cite{BalleBGN20} to shufflers without symmetry. 

We now upper bound the probability that the number of connected components of $G$ is $c$, where $G$ is the underlying communication graph for the split-and-mix-protocol under a $\gamma$-imperfect shuffle. 
\begin{lemma}
\lemlab{lem:cc:prob}
Let $n\ge 19$ and $m\ge 8e^{4\gamma}$. 
Let $G$ be the communication graph of a $\gamma$-imperfect shuffler (on an $n$-player $m$-message protocol). 
Let $p(n,c)$ denote the probability that the number of connected components of $G$ is $c$. 
\pasin{One thing I am a little unsure is why we can decompose to the term $p(|S|,1)p(n-|S|,c-1)$, i.e. are the event that $S$ has a single component and $n - |S|$ has $c - 1$ components independent? This is not obvious to me since our shuffler is not uniform, so conditioning on the fact that $S$ is separated from $n - S$ does not seem to imply that these two events become independent. In any case, we can probably get way with this by just ignoring the term $p(|S|,1)$ completely anyway. Wdyt?}
\samson{Yes I agree. In fact, is it unclear whether $n$ refers to graph size or the subgraph size?}
Then 
\[p(n,c)\le\frac{2^{c-1}}{c!}\left(\frac{e}{n}\right)^{\frac{(m-1)(c-1)}{32e^{4\gamma}}}\cdot e^{2\gamma(m-1)(c-1)}.\]
\end{lemma}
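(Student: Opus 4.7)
The plan is to prove this by induction on $c$. The base case $c = 1$ reduces to the trivial bound $p(n, 1) \le 1$, which matches the right-hand side. For $c \ge 2$, the crucial structural observation is that if vertex $1$ lies in a component $T$ of $G$, then the event ``$G$ has exactly $c$ components with vertex $1$ in $T$'' implies both ``$T$ is disconnected from $[n] \setminus T$'' and ``the induced subgraph on $[n] \setminus T$ has $c - 1$ components''. Moreover, conditional on the first event (equivalently, each of the $m$ shuffles preserves the partition $\{T, [n] \setminus T\}$), the conditional distribution of the shuffles restricted to $[n] \setminus T$ is itself a $\gamma$-imperfect shuffler on $[n] \setminus T$. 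This holds because for any two permutations $\pi, \pi'$ that both preserve $T$, we have $\Swap(\pi, \pi') = \Swap(\pi|_{[n] \setminus T}, \pi'|_{[n] \setminus T})$, so the ratio bound from \defref{def:imp:shuffle} carries over to the restriction verbatim, round by round.

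Armed with this, I would execute the inductive step via a union bound over the choice of $T$:
\[
p(n, c) \;\le\; \sum_{s = 1}^{n-1} \binom{n-1}{s-1}\, q(n, s) \, p(n - s, c - 1),
\]
where $q(n, s)$ denotes the upper bound from \lemref{lem:connect:one:prob} on the probability that a fixed set of size $s$ is disconnected from its complement (which depends only on $s$, not on the specific identity of $T$). Substituting the inductive hypothesis for $p(n-s, c-1)$ and choosing the sharper of the two available bounds from \lemref{lem:connect:one:prob} for each $s$, the target inequality reduces to a purely combinatorial estimate on this sum.

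The main obstacle is this final sum estimate, which requires handling two distinct regimes. For small $s$ (say $s \le n/2$), I would use $q(n, s) \le e^{2 s m \gamma} \binom{n}{s}^{-m}$ together with $\binom{n-1}{s-1} \le \binom{n}{s}$ to reduce each summand to roughly $\binom{n}{s}^{-(m-1)} e^{2 s m \gamma}$, which decays geometrically in $s$ under the assumption $m \ge 8 e^{4\gamma}$ since the exponential gain $e^{2sm\gamma}$ is dominated by the combinatorial loss $\binom{n}{s}^{-(m-1)}$. For $s$ close to $n$, the inductive factor $\bigl(e/(n-s)\bigr)^{(m-1)(c-2)/(32 e^{4\gamma})}$ threatens to blow up, so I would instead invoke the $k$-parameter version of \lemref{lem:connect:one:prob} with $k = n - s$, and exploit the fact that $p(n - s, c - 1) = 0$ whenever $n - s < c - 1$ to truncate the tail. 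The hypothesis $n \ge 19$ ensures the resulting geometric series converges with enough slack to yield exactly the multiplicative factor $\tfrac{2}{c} \bigl(\tfrac{e}{n}\bigr)^{(m-1)/(32 e^{4\gamma})} e^{2 \gamma (m-1)}$ needed to advance the induction from $c - 1$ to $c$.
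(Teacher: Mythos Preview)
Your overall strategy---induction on $c$, union bound over the isolated set, and invoking \lemref{lem:connect:one:prob}---is the same as the paper's. Your observation that the conditional distribution of the shuffles on $[n]\setminus T$, given that $T$ is preserved, is again a $\gamma$-imperfect shuffler is a genuine improvement: the paper's own proof uses the recursion $p(n,c)\le\frac{1}{c}\sum_s\binom{n}{s}\mathbb{P}_S\,p(n-s,c-1)$ without fully justifying why $p(n-s,c-1)$ is the right conditional bound (indeed the authors flag this concern themselves in a marginal note). Your argument fills exactly this gap. One small correction: the equality $\Swap(\pi,\pi')=\Swap(\pi|_{[n]\setminus T},\pi'|_{[n]\setminus T})$ is false for arbitrary $\pi,\pi'$ preserving $T$ (they may also differ on $T$); what you actually need, and what does hold, is that for each fixed $\tau$ on $T$, $\Swap((\tau,\sigma),(\tau,\sigma'))=\Swap(\sigma,\sigma')$, and then summing over $\tau$ gives the marginal bound.

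There are two execution issues. First, by anchoring at the component of vertex $1$ rather than summing over all components, your recurrence carries $\binom{n-1}{s-1}$ but no $\tfrac{1}{c}$ prefactor; the paper's $\tfrac{1}{c}$ is precisely what accumulates into the $1/c!$ in the stated bound. Without it, your inductive step needs the sum over $s$ to be at most $\tfrac{2}{c}\cdot(\text{target increment})$, and it is not clear where the $1/c$ would come from---the tail terms near $s=n-c+1$ do not naturally shrink with $c$. You may only recover the bound without the $1/c!$ factor (which, incidentally, still suffices for the downstream \lemref{lem:exp:qconn:bound}). Second, the claimed geometric decay of $\binom{n}{s}^{-(m-1)}e^{2sm\gamma}$ does \emph{not} persist up to $s=n/2$: the ratio of consecutive terms is $\bigl(\tfrac{s}{n-s+1}\bigr)^{m-1}e^{2m\gamma}$, which exceeds $1$ once $s\gtrsim n/(1+e^{2\gamma})$. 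The paper handles this by splitting into three regimes---head $s\le n/(4e^{4\gamma})$, tail $s\ge 3n/4$, and a middle range bounded directly via Stirling---and you will need a similar refinement rather than a two-regime split.
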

\begin{proof}
For a fixed set $S$, let $\mathbb{P}_S$ denote the probability that there is no edge from $S$ to $[n]\setminus S$. 
Let $p(n,c)$ denote the probability that the number of connected components of $G$ is $c$. 
Then
\begin{align*}
p(n,c)&=\frac{1}{c}\sum_{S\subseteq[n]}\mathbb{P}_S\cdot p(n-|S|,c-1)\\
&\le\frac{1}{c}\sum_{s=1}^{n-c+1}\binom{n}{s}\mathbb{P}_S\cdot p(n-|S|,c-1).
\end{align*}
We decompose this sum and apply \lemref{lem:connect:one:prob}. 


By \lemref{lem:connect:one:prob}, we have $\mathbb{P}_S\le \min(e^{2(n-s)m\gamma}\binom{n}{s}^{-m},e^{2sm\gamma}\binom{n}{s}^{-m})$. 
By \lemref{lem:connect:one:prob}, we also have $\mathbb{P}_S\le e^{k m\gamma}\binom{n/2}{k}^{-m}$ for any $k\le\min(s,n-s)$. 
Observe that for $k\ge n-s\ge\frac{n}{2}$, we have $e^{2(n-s)m\gamma}\binom{n}{s}^{-m}\le e^{2km\gamma}\binom{n}{k}^{-m}\le e^{2km\gamma}\binom{n/2}{k}^{-m}$. 
Thus for $k=\frac{n}{4e^{4\gamma}}$, 
\begin{align*}
p(n,c)&\le\frac{1}{c}\sum_{s=1}^{k}\binom{n}{s}\binom{n}{s}^{-m}e^{2sm\gamma}\cdot p(n-|S|,c-1)\\
&+\frac{1}{c}\sum_{s=k+1}^{n-c+1}\binom{n}{s}\binom{n/2}{k}^{-m}e^{2km\gamma}\cdot p(n-|S|,c-1).
\end{align*}
Observe that $k=\frac{n}{4e^{4\gamma}}$ implies that
\begin{align*}
e^{2\gamma}&\le\left(\frac{n}{2k}\right)^{1/2}\\
e^{2km\gamma}&\le\left(\frac{n}{2k}\right)^{km/2}\le\binom{n/2}{k}^{m/2}\\
\binom{n/2}{k}^{-m}e^{2km\gamma}&\le\binom{n/2}{k}^{-m/2}\le\binom{n}{k}^{-m/2}.
\end{align*}
Thus we have
\begin{align*}
p(n,c)&\le\frac{1}{c}\sum_{s=1}^{k}\binom{n}{s}\binom{n}{s}^{-m}e^{2sm\gamma}\cdot p(n-|S|,c-1)\\
&+\frac{1}{c}\sum_{s=k+1}^{n-c+1}\binom{n}{s}\binom{n}{k}^{-m/2}\cdot p(n-|S|,c-1).
\end{align*}
Since $k=\frac{n}{4e^{4\gamma}}$, then
\begin{align*}
\binom{n}{k}^{-m/2}&\le(4e^{4\gamma})^{-\frac{nm}{8e^{4\gamma}}}\le (2e)^{-\frac{nm}{8e^{4\gamma}}}\le\binom{n}{n/2}^{-\frac{m}{4e^{4\gamma}}}\le\binom{n}{s}^{-\frac{m}{4e^{4\gamma}}}.
\end{align*}
Hence,
\begin{align*}
p(n,c)&\le\frac{1}{c}\sum_{s=1}^{k}\binom{n}{s}\binom{n}{s}^{-m}e^{2sm\gamma}\cdot p(n-|S|,c-1)\\
&+\frac{1}{c}\sum_{s=k+1}^{n-c+1}\binom{n}{s}^{1-\frac{m}{4e^{4\gamma}}}\cdot p(n-|S|,c-1).
\end{align*}
For $m\ge 8e^{4\gamma}$, we have $1\le\frac{m}{8e^{4\gamma}}$ and thus
\begin{align*}
p(n,c)&\le\frac{1}{c}\sum_{s=1}^{k}\binom{n}{s}\binom{n}{s}^{-m}e^{2sm\gamma}\cdot p(n-|S|,c-1)\\
&+\frac{1}{c}\sum_{s=k+1}^{n-c+1}\binom{n}{s}^{-\frac{m}{8e^{4\gamma}}}\cdot p(n-|S|,c-1).
\end{align*}
We first apply the induction hypothesis that $p(n,c)\le\frac{2^{c-1}}{c!}\left(\frac{e}{n}\right)^{\frac{(m-1)(c-1)}{32e^{4\gamma}}}\cdot e^{\gamma(m-1)(c-1)}$:
\begin{align*}
p(n,c)&\le\frac{2^{c-1}}{c!}\left(\frac{e}{n}\right)^{\frac{(m-1)(c-1)}{32e^{4\gamma}}}\cdot e^{2\gamma(m-1)(c-1)}\cdot\frac{1}{2}\cdot e^{\frac{(1-m)}{32e^{4\gamma}}}\cdot e^{2\gamma(1-m)}\\
&\cdot\left(\sum_{s=1}^{k}\binom{n}{s}^{1-m}e^{2sm\gamma}\left(\frac{n^{c-1}}{(n-s)^{c-2}}\right)^{\frac{m-1}{32e^{4\gamma}}}+\sum_{s=k+1}^{n-c+1}\left(\frac{(n-s)!s!n^{c-1}}{n!(n-s)^{c-2}}\right)^{\frac{m-1}{32e^{4\gamma}}}\right).
\end{align*}
We upper bound $p(n,c)$ by upper bounding the summation across the first $k$ terms, i.e., the head of the summation, then upper bounding the tail terms of the summation, i.e., the terms with $s\ge\frac{3n}{4}$, and finally upper bounding the remaining terms of the summation, i.e., $s\in\left[k,\frac{3n}{4}\right]$.

\paragraph{Upper bounding the head terms in the summation.} 
We now upper bound the summation across all $s\le k$. 
Let $a_s=\binom{n}{s}^{1-m}e^{2sm\gamma}\left(\frac{n^{c-1}}{(n-s)^{c-2}}\right)^{\frac{m-1}{32e^{4\gamma}}}$. 
For $s\le k=\frac{n}{4e^{4\gamma}}$ and $m\ge 8e^{4\gamma}$,
\begin{align*}
\frac{a_s}{a_{s-1}}&=\left(\frac{s}{n-s+1}\right)^{m-1}e^{2m\gamma}\left(\frac{n-s+1}{n-s}\right)^{\frac{(m-1)(c-2)}{32e^{4\gamma}}}\\
&\le\left(\frac{1}{8e^{4\gamma}}\right)^{m-1}e^{2m\gamma} e^{\frac{(m-1)(c-2)}{n-s}}\\
&\le\left(\frac{1}{8e^{4\gamma}}\right)^{m-1}(e^{4\gamma})^{m-1} e^{\frac{4(m-1)}{3}}\\
&\le\left(\frac{e^{4/3}}{8}\right)^{m-1}\le\left(\frac{1}{2}\right)^{m-1}\le\frac{1}{25}.
\end{align*}
Then through a geometric series, we bound the summation
\begin{align*}
\sum_{s=1}^k a_s&\le\sum_{s=1}^\infty\frac{a_1}{25^{s-1}}\le\frac{26a_1}{25}\\
&\le\frac{26}{25}n^{1-m} e^{m\gamma}\left(\frac{n^{c-1}}{(n-1)^{c-2}}\right)^{\frac{m-1}{32e^{4\gamma}}}\\
&\le\frac{26}{25}e^{m\gamma} e^{\frac{m-1}{32e^{4\gamma}}}
\end{align*}

\paragraph{Upper bounding the tail terms in the summation.}
We now upper bound the summation across all $s\ge\lceil\frac{3n}{4}\rceil$. 
Let $b_s=\left(\frac{(n-s)!s!n^{c-1}}{n!(n-s)^{c-2}}\right)^{\frac{m-1}{32e^{4\gamma}}}$. 
Then for $s\ge\frac{3n}{4}$,
\begin{align*}
\frac{b_s}{b_{s-1}}&=\left(\frac{s}{n-s+1}\left(\frac{n-s+1}{n-s}\right)^{c-2}\right)^{\frac{m-1}{32e^{4\gamma}}}\\
&\ge\left(\frac{s}{n-s}\right)^{\frac{m-1}{32e^{4\gamma}}}\ge 9.
\end{align*}
We again bound another subset of the sum through a geometric series:
\begin{align*}
\sum_{s=\ceil{3n/4}}^{n-c+1}b_s&\le\sum_{s=\ceil{3n/4}}^{n-c+1}\frac{b_{n-c+1}}{9^{n-c+1-s}}\\
&\sum_{s=-\infty}^{n-c+1}\frac{b_{n-c+1}}{9^{n-c+1-s}}\\
&=\frac{9b_{n-c+1}}{8}\\
&=\frac{9}{8}\left(\frac{(c-1)!(n-c+1)!n^{c-1}}{n!(c-1)^{c-2}}\right)^{\frac{m-1}{32e^{4\gamma}}}.
\end{align*}
Similar to \cite{BalleBGN20}, we bound the last expression using Sterling's bound, $\sqrt{2\pi}n^{n+\frac{1}{2}}e^{-n}\le n!\le en^{n+\frac{1}{2}}e^{-n}$, so that
\[\frac{9}{8}\left(\frac{(c-1)!(n-c+1)!n^{c-1}}{n!(c-1)^{c-2}}\right)^{\frac{m-1}{32e^{4\gamma}}}\le\frac{9}{8}\left(\frac{e}{\sqrt{2\pi}}(c-1)^{1.5}\left(1-\frac{(c-1)}{n}\right)^{n-c+1.5}\right)^{\frac{m-1}{32e^{4\gamma}}},\]
which is maximized at $c=3$ for $n\ge 19$, $m\ge 8e^{4\gamma}$, and $c\le\frac{n}{4}$. 
Thus,
\begin{align*}
\frac{9}{8}\left(\frac{e}{\sqrt{2\pi}}(c-1)^{1.5}\left(1-\frac{(c-1)}{n}\right)^{n-c+1.5}\right)^{\frac{m-1}{32e^{4\gamma}}}&\le\frac{9}{8}\left(\frac{2e}{\sqrt{\pi}}\left(1-\frac{2}{n}\right)^{n-1.5}\right)^{\frac{m-1}{32e^{4\gamma}}}\\
&\le\frac{9}{8}(1.27)^{\frac{m-1}{32e^{4\gamma}}}.
\end{align*}

\paragraph{Upper bounding the middle terms in the summation.} 
It remains to upper bound the summation across $s\in\left[\frac{n}{4e^{4\gamma}},\frac{3n}{4}\right]$. 
We have for $\alpha=\frac{s}{n}$,
\[b_s=\left(\frac{((1-\alpha)n)!(\alpha n)!}{(n-1)!(1-\alpha)^{c-2}}\right)^{\frac{m-1}{32e^{4\gamma}}}.\]
By Sterling's bound, we have
\[b_s\le\left(\frac{e^2}{\sqrt{2\pi}}\sqrt{n}(1-\alpha)^{2.5-c+(1-\alpha)n}\alpha^{\alpha n+\frac{1}{2}}\right)^{\frac{m-1}{32e^{4\gamma}}}\le\left(\frac{e^2\sqrt{n}}{\sqrt{2\pi}}\alpha^{\alpha n}\right)^{\frac{m-1}{32e^{4\gamma}}}.\]
Since there are at most $n$ such terms $b_s$, then
\[\sum_{s=k+1}^{\lceil 3n/4\rceil-1} b_s\le n\left(\frac{e^2\sqrt{n}}{\sqrt{2\pi}}\left(\frac{3}{4}\right)^{\frac{3n}{4}}\right)^{\frac{m-1}{32e^{4\gamma}}}\le2\left(en\left(\frac{3}{4}\right)^{\frac{3n}{4}}\right)^{\frac{m-1}{32e^{4\gamma}}}\le 2.\]
\paragraph{Putting things together.}
Combining the upper bounds across the three summations, we have
\begin{align*}
\sum_{s=1}^{k}\binom{n}{s}^{1-m}e^{2sm\gamma}&\left(\frac{n^{c-1}}{(n-s)^{c-2}}\right)^{\frac{m-1}{32e^{4\gamma}}}+\sum_{s=k+1}^{n-c+1}\left(\frac{(n-s)!s!n^{c-1}}{n!(n-s)^{c-2}}\right)^{\frac{m-1}{32e^{4\gamma}}}\\
&\le\frac{26}{25}e^{m\gamma} e^{\frac{m-1}{32e^{4\gamma}}}+2+\frac{9}{8}(1.27)^{\frac{m-1}{32e^{4\gamma}}}\\
&\le 2e^{\frac{m-1}{32e^{4\gamma}}}\cdot e^{m\gamma}\le 2e^{\frac{m-1}{32e^{4\gamma}}}\cdot e^{2\gamma(m-1)}.
\end{align*}
Therefore, we have
\[p(n,c)\le\frac{2^{c-1}}{c!}\left(\frac{e}{n}\right)^{\frac{(m-1)(c-1)}{32e^{4\gamma}}}\cdot e^{2\gamma(m-1)(c-1)},\]
as desired.
\end{proof}

We now upper bound the expected value of $\Ex{q^{C(G)}}$ for the purposes of upper bounding the right hand side of \lemref{lem:prob:shuf:exp:graph}. 
\begin{lemma}
\lemlab{lem:exp:qconn:bound}
Let $n\ge 19$, $m\ge 8e^{4\gamma}$, and $q\le\left(\frac{n}{e}\right)^{\frac{(m-1)}{32e^{4\gamma}}}e^{2\gamma(1-m)}$. 
Let $G$ be the graph on $n$ vertices formed a random instantiation of the split-and-mix protocol $\calP_{m,n}$ with $m$ messages for each of $n$ players, using a $\gamma$-imperfect shuffler $\calS$. 
That is, let $G$ have an edge between $i$ and $j$ if and only if player $i$ passes one of their $m$ messages to player $j$. 
Then 
\[\Ex{q^{C(G)}}\le q+3q^2 e^{2\gamma(m-1)}\left(\frac{e}{n}\right)^{\frac{m-1}{32e^{4\gamma}}}.\]
\end{lemma}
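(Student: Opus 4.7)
}
The plan is to expand the expectation over the number of connected components as $\Ex{q^{C(G)}} = \sum_{c=1}^{n} q^c \cdot p(n,c)$, isolate the trivial $c=1$ contribution, and then apply \lemref{lem:cc:prob} to the tail $c \geq 2$. The $c=1$ term satisfies $q \cdot p(n,1) \leq q$ since $p(n,1) \leq 1$, so the content is in controlling the remaining sum.

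For $c \geq 2$, \lemref{lem:cc:prob} (whose hypotheses $n\ge 19$ and $m\ge 8e^{4\gamma}$ are assumed here) gives
\[
q^c \cdot p(n,c) \;\leq\; q \cdot \frac{2^{c-1}}{c!} \cdot \beta^{c-1},
\qquad \text{where } \beta \;:=\; q \cdot e^{2\gamma(m-1)} \left(\frac{e}{n}\right)^{\frac{m-1}{32 e^{4\gamma}}}.
\]
The key observation is that the hypothesis $q \le \left(\frac{n}{e}\right)^{(m-1)/(32 e^{4\gamma})} e^{2\gamma(1-m)}$ is exactly the statement $\beta \leq 1$. I would factor a single $\beta$ out of the sum to expose the target quantity $q \cdot \beta = q^2 \cdot e^{2\gamma(m-1)} \left(\frac{e}{n}\right)^{(m-1)/(32 e^{4\gamma})}$, and then use $\beta \leq 1$ to bound the residual geometric-type series by a constant:
\[
\sum_{c=2}^{n} \frac{2^{c-1}}{c!} \beta^{c-1}
\;=\; \beta \sum_{c=2}^{n} \frac{2^{c-1}}{c!} \beta^{c-2}
\;\leq\; \beta \sum_{c=2}^{\infty} \frac{2^{c-1}}{c!}
\;=\; \beta \cdot \frac{e^2 - 3}{2}
\;\leq\; 3\beta.
\]

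Combining the $c=1$ contribution with this bound yields
\[
\Ex{q^{C(G)}} \;\leq\; q + 3 q \beta \;=\; q + 3 q^2 \, e^{2\gamma(m-1)} \left(\frac{e}{n}\right)^{\frac{m-1}{32 e^{4\gamma}}},
\]
which is the claimed inequality. There is no real obstacle here once \lemref{lem:cc:prob} is in hand; the only thing to be careful about is matching the combinatorial bookkeeping (the factor $2^{c-1}/c!$ and the exponent $c-1$ on the distortion and shrinkage factors), and verifying that the numerical constant $(e^2-3)/2 \approx 2.19$ is at most $3$ so that the clean form on the right-hand side of the lemma is recovered.
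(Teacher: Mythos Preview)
Your proposal is correct and follows essentially the same approach as the paper: expand $\Ex{q^{C(G)}}=\sum_c q^c p(n,c)$, isolate the $c=1$ term, apply \lemref{lem:cc:prob} to the tail, and use the hypothesis on $q$ (equivalently $\beta\le 1$) to control the resulting series. The only cosmetic difference is that the paper bounds the tail by a geometric series with ratio $\tfrac{2}{3}\beta\le\tfrac{2}{3}$, whereas you evaluate $\sum_{c\ge 2}\tfrac{2^{c-1}}{c!}=(e^2-3)/2$ directly; both routes land on the same constant $3$.
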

\begin{proof}
By \lemref{lem:cc:prob}, we have
\[p(n,c)\le\frac{2^{c-1}}{c!}\left(\frac{e}{n}\right)^{\frac{(m-1)(c-1)}{32e^{4\gamma}}}\cdot e^{2\gamma(m-1)(c-1)}.\]
Taking the expectation, we have
\[\Ex{q^{C(G)}}\le\sum_{c=1}^nq^c\frac{2^{c-1}}{c!}\left(\frac{e}{n}\right)^{\frac{(m-1)(c-1)}{32e^{4\gamma}}}\cdot e^{2\gamma(m-1)(c-1)}.\]
Since term in the summand after the second term is at most $\frac{2q}{3}\left(\frac{e}{n}\right)^{\frac{(m-1)}{32e^{4\gamma}}}e^{2\gamma(m-1)}$ times the previous term in the summand, then
\[\Ex{q^{C(G)}}\le q+q^2 e^{2\gamma(m-1)}\left(\frac{e}{n}\right)^{\frac{m-1}{32e^{4\gamma}}}\sum_{i=0}^\infty\left(\frac{2q}{3}\left(\frac{e}{n}\right)^{\frac{(m-1)}{32e^{4\gamma}}}e^{2\gamma(m-1)}\right)^i.\]
Since $q\le\left(\frac{n}{e}\right)^{\frac{(m-1)}{32e^{4\gamma}}}e^{2\gamma(1-m)}$ by assumption, then
\[\Ex{q^{C(G)}}\le q+3q^2 e^{2\gamma(m-1)}\left(\frac{e}{n}\right)^{\frac{m-1}{32e^{4\gamma}}}.\]
\end{proof}
We now analyze the statistical security of the split-and-mix protocol. 
\begin{lemma}
\lemlab{lem:stat:sec:ikos}
Let $n\ge 19$, $m\ge 8e^{4\gamma}$, and $q\le\left(\frac{n}{e}\right)^{\frac{(m-1)}{32e^{4\gamma}}}e^{2\gamma(1-m)}$. 
Then we have worst-case statistical security with parameter
\[\sigma\le(m-1)\left(\frac{\log n-\log e}{64e^{4\gamma}}-2\gamma\log e\right)-3\log(3q),\]
\end{lemma}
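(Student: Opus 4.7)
My plan is to chain together the structural reductions already proved in this section, peeling away the total variation bound, the collision-probability interpretation, and finally the connected-components bound in order.

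First, I would invoke \lemref{lem:avg:worst} so that it suffices to bound the average-case expected total variation distance $\mathbb{E}_{\vec{\sfX},\vec{\sfX'}}[\TVD_{\mid \vec{\sfX},\vec{\sfX'}}(\calP_{m,n}(\vec{\sfX}),\calP_{m,n}(\vec{\sfX'}))]$. Then \lemref{lem:exp:tvd} directly bounds this quantity by $\sqrt{q^{mn-1}\PPr{\vec{\calR}(\vec{\sfX})=\calS^{-1}\circ\calS'\circ\vec{\calR}'(\vec{\sfX})}-1}$. The crucial conceptual step at this point is that even though $\calS^{-1}\circ\calS'$ is a composition of two shufflers of which only one has the swap-distance guarantee, \lemref{lem:do:postprocess} tells us that the composite is again a $\gamma$-imperfect shuffler with the same parameter. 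This is what lets us treat the composition as a single shuffler and feed it into the graph-based machinery.

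Having identified $\calS^{-1}\circ\calS'$ as a $\gamma$-imperfect shuffler, I would apply \lemref{lem:prob:shuf:exp:graph} to the induced communication graph $G$, yielding $\PPr{\vec{\calR}(\vec{\sfX})=\calS^{-1}\circ\calS'\circ\vec{\calR}'(\vec{\sfX})}\le \mathbb{E}[q^{C(G)-mn}]$. Multiplying by $q^{mn-1}$ collapses the exponents to $q^{-1}\mathbb{E}[q^{C(G)}]$, at which point \lemref{lem:exp:qconn:bound} supplies
\[ q^{-1}\mathbb{E}[q^{C(G)}] - 1 \le 3q\, e^{2\gamma(m-1)}\left(\frac{e}{n}\right)^{\frac{m-1}{32 e^{4\gamma}}}. \]
Taking a square root gives a bound on the expected total variation distance, and demanding that this bound be at most $2^{-\sigma}$ becomes, after taking base-$2$ logarithms,
\[ \sigma \le \frac{m-1}{64 e^{4\gamma}}(\log n - \log e) - \gamma(m-1)\log e - \tfrac{1}{2}\log(3q). \]
The claimed inequality in the lemma is a weaker (and cleaner) form of this, obtained by loosening the coefficients on $\gamma(m-1)\log e$ and $\log(3q)$, which is valid because a smaller upper bound on $\sigma$ only strengthens the constraint.

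The step I expect to require the most care is the use of \lemref{lem:do:postprocess}: the analysis does not give us the probability mass on individual permutations under $\calS^{-1}\circ\calS'$, and one must verify that the swap-distance guarantee really is preserved under pre-composition by an arbitrary (possibly adversarial) second shuffler. Once that postprocessing property is in hand, the rest is essentially bookkeeping: checking that the hypotheses $n\ge 19$, $m\ge 8 e^{4\gamma}$, and the upper bound on $q$ propagate through each of \lemref{lem:exp:qconn:bound}, \lemref{lem:prob:shuf:exp:graph}, and \lemref{lem:exp:tvd}, and then performing the logarithmic simplification above.
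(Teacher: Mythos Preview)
Your proposal is correct and follows essentially the same route as the paper: combine \lemref{lem:exp:tvd} and \lemref{lem:prob:shuf:exp:graph}, use \lemref{lem:do:postprocess} to certify that $\calS^{-1}\circ\calS'$ is itself a $\gamma$-imperfect shuffler, plug in \lemref{lem:exp:qconn:bound}, and then invoke \lemref{lem:avg:worst} to pass from average- to worst-case security. Your observation that the stated bound on $\sigma$ is a deliberate loosening of the sharper inequality you derive is exactly how the paper handles the final step as well.
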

\begin{proof}
By \lemref{lem:exp:tvd} and \lemref{lem:prob:shuf:exp:graph}, we have
\[\mathbb{E}_{\vec{\sfX},\vec{\sfX}'}[\TVD(\calP_{m,n}(\vec{\sfX}),\calP_{m,n}(\vec{\sfX}'))]\le\sqrt{q^{mn-1}\Ex{q^{C(G)-mn}}-1},\]
where $C(G)$ is the communication graph for the shuffle $\calS^{-1}\circ\calS'$. 
By \lemref{lem:do:postprocess} and the fact that $\calS'$ is a $\gamma$-imperfect shuffler, we have that $\calS^{-1}\circ\calS'$ is also a $\gamma$-imperfect shuffler and thus it suffices to upper bound $\Ex{q^{C(G)-mn}}$ where $C(G)$ is the communication graph for an arbitrary $\gamma$-imperfect shuffler $\calS$. 
Therefore by \lemref{lem:exp:qconn:bound}, we have average case statistical security less than or equal to
\[2^{-\sigma}\ge\sqrt{3q^3e^{2\gamma(m-1)}\left(\frac{e}{n}\right)^{\frac{m-1}{32e^{4\gamma}}}},\]
which holds for 
\[\sigma\le(m-1)\left(\frac{\log n-\log e}{64e^{4\gamma}}-2\gamma\log e\right)-3\log(3q).\]
The claim then follows by the reduction of worst-case input to average-case input by \lemref{lem:avg:worst}. 
\end{proof}
Now it can be verified that by restricting $\gamma\le\frac{\log\log n}{80}$, then we have both $728e^{4\gamma}\le\log n$ and  $\ceil{2n^{3/2}}\le\left(\frac{n}{e}\right)^{\frac{(m-1)}{32e^{4\gamma}}}e^{2\gamma(1-m)}$. 
These conditions imply that 1) $\left(\frac{\log n-\log e}{64e^{4\gamma}}-2\gamma\log e\right)=\O{\frac{\log n}{e^{4y}}}$, so that the parameter $\sigma$ has a non-empty range in the statement of \lemref{lem:stat:sec:ikos}, and 2) $q=\ceil{2n^{3/2}}$ satisfies $q\le\left(\frac{n}{e}\right)^{\frac{(m-1)}{32e^{4\gamma}}}e^{2\gamma(1-m)}$ in the statement of \lemref{lem:stat:sec:ikos}. 
As a corollary, we obtain the following guarantees for worst-case statistical security:
\thmsecurecc*

\section{Conclusion and Discussion}
In this work, we introduce the imperfect shuffle DP model, as a means of abstracting out real-world scenarios that prevent perfect shuffling. 
We also give a real summation protocol with nearly optimal error and small communication complexity. 
The protocol, which is based on the split-and-mix protocol~\cite{IshaiKOS06}, is similar to that of the (perfect) shuffle model~\cite{BalleBGN20,GhaziMPV20}, while the main challenge comes in the analysis. 
Although we overcome this hurdle for this particular protocol, our techniques are quite specific. 
Therefore, an interesting open question is whether there is a general theorem that transfer the privacy guarantee in the perfect shuffle model to that in the imperfect shuffle model, possibly with some loss in the privacy parameters. 

Another interesting direction is whether a sub-exponential dependency on $\gamma$ is possible in the number of messages, as the current dependency in \thmref{thm:main} is $\O{e^{4\gamma}}$. 
It also remains an interesting open question whether our results can be extended to the setting where all the $mn$ messages are shuffled together using a single shuffler. 

Finally, it would be natural to consider an imperfect shuffler that handles additive error in the manner of approximate differential privacy. 
That is, what can we say about a variant of \defref{def:imp:shuffle} that permits an additive $\delta$ term along the lines of \defref{def:dp}?

\def\shortbib{0}
\bibliographystyle{alpha}
\bibliography{references}
\appendix
\section{Additional Proofs}
\applab{app:extra:proofs}
\lemsecuretodp*
\begin{proof}
Given $\Xi$, we construct a protocol $\calP$ on a field of size $q:=\ceil{2n^{3/2}}$ as follows. 
Each input $x_i\in[0,1]$ is rounded to a value $y_i$ precision $p=\sqrt{n}$, so that $y_i=\lfloor x_ip\rfloor+\Ber(x_ip-\lfloor x_ip\rfloor)$. 
We then add Polya noise to each term, so that $z_i=y_i+\Polya\left(\frac{1}{n},e^{-\eps/p}\right)-\Polya\left(\frac{1}{n},e^{-\eps/p}\right)$. 
The protocol $\calP$ then runs $\Xi$ on the inputs $z_1,\ldots,z_n$ to achieve a sum $Z$. 
The protocol $\calP$ then decodes $Y$ by outputting $\widetilde{X}=\frac{Z}{p}$ if $Z\le\frac{3np}{2}$ and by outputting $\widetilde{X}=\frac{Z-q}{p}$ otherwise if $Z>\frac{3np}{2}$. 

\paragraph{Upper bounding the expected error.}
There are multiple sources of error. 
The first source of error comes from the randomized rounding that produces the values $y_1,\ldots,y_n$ from $x_1,\ldots,x_n$. 
By \lemref{lem:round:err}, we have that
\[\Ex{\left(\sum_{i=1}^n\left(x_i-\frac{y_i}{p}\right)\right)^2}\le\frac{n}{4p^2}.\]
Since $p=\sqrt{n}$, then by Markov's inequality, 
\[\PPr{\left(\sum_{i=1}^n\left(x_i-\frac{y_i}{p}\right)\right)^2\ge\frac{n^2}{16}}\le\frac{4}{n^2}.\]

The second source of error comes from the noise added to the variables $y_1,\ldots,y_n$ to obtain the values $z_1,\ldots,z_n$. 
Beyond the error incurred from the randomized rounding, $\widetilde{X}$ has additional error distributed according to the discrete Laplacian distribution unless the total noise added has magnitude larger than $\frac{n}{2}$, which could potentially cause an additive $\O{n^2}$ squared error due to incorrect decoding of $\widetilde{X}$ from $Z$. 
By \factref{fact:polya:dlap}, the total noise added to the variables $y_i$ is a random variable drawn from the discrete Laplace distribution, i.e.,
\[\sum_{i=1}^n(z_i-y_i)\sim\DLap\left(e^{-\eps/p}\right).\]
By the distribution of the discrete Laplacian, the noise added by the discrete Laplacian distribution is more than $\frac{n}{4}$ with probability $\O{e^{-\eps n/4}}$. 
Therefore, the expected mean squared error in this protocol is at most $\O{\frac{1}{\eps^2}}+\frac{4}{n^2}\cdot\O{n^2}+\O{e^{-\eps n/4}}\cdot\O{n^2}=\O{\frac{1}{\eps^2}}$. 
By Jensen's inequality, the expected absolute error in this protocol is at most $\O{\frac{1}{\eps}}$. 

\paragraph{Privacy considerations.}
Consider the protocol $\Xi'$ with input $w_1=Z$ and $w_2=\ldots=w_n=0$, where $Z=z_1+\ldots+z_n$, as defined above. 
Note that by \factref{fact:polya:dlap}, we have $w_1=Z\sim\sum_{y=1}^n y_i+\DLap\left(e^{-\eps/p}\right)$. 
Since the sensitivity of $\sum_{y=1}^n y_i$ is $p$, then $Z$ is $(\eps,0)$-differentially private. 
Therefore, by post-processing, $\Xi'$ is also $(\eps,0)$-differentially private. 

Moreover, we can upper bound the statistical distance between $\Xi$ and $\Xi'$ by $2^{-\sigma}$ by a coupling argument. 
Specifically, by coupling the noise added to $x_i$ by both $\Xi$ and $\Xi'$, 
the protocols $\Xi$ and $\Xi'$ will output the same sum given this coupled randomness outside of the  $2^{-\sigma}$ difference induced by the $\sigma$-secure property. 
\end{proof}

%
%
%
%
\end{document}